\def\nb0{{\mathbf{0}}}
\def\nb1{{\mathbf{1}}}
\def\ncalA{{\mathcal{A}}}
\def\ncalB{{\mathcal{B}}}
\def\ncalC{{\mathcal{C}}}
\def\ncalE{{\mathcal{E}}}
\def\ncalK{{\mathcal{K}}}
\def\ncalL{{\mathcal{L}}}
\def\ncalO{{\mathcal{O}}}
\def\nbbE{{\mathbb{E}}}
\def\nbbP{{\mathbb{P}}}
\def\nbbR{{\mathbb{R}}}
\def\nrmd{{\rm d}}
\newtheorem{lemma}{Lemma}
\newtheorem{ndef}{Definition}
\newtheorem{theorem}{Theorem}
\newtheorem{prop}{Proposition}
\newtheorem{cor}{Corollary}
\newtheorem{remark}{Remark}
\def\E{\mathbb{E}}
\def\P{\mathbb{P}}
\def\pc{\mathtt{P_c}}
\def\rc{\mathtt{R_c}}   
\def\R{\mathbb{R}}
\def\T{\beta}							
\def\snr{\mathtt{SNR}}
\def\sir{\mathtt{SIR}}
\begin{document}

\graphicspath{{./Figures/}}
\title{Downlink MIMO HetNets: Modeling, Ordering Results and Performance Analysis}
\author{Harpreet S. Dhillon,~\IEEEmembership{Student Member, IEEE}, Marios Kountouris,~\IEEEmembership{Member, IEEE}, \\ and Jeffrey G. Andrews,~\IEEEmembership{Fellow, IEEE}
\thanks{This work was supported by NSF grant CIF-1016649. A part of this paper was presented at Asilomar 2012~\cite{DhiKouC2012}.}
\thanks{H. S. Dhillon and J. G. Andrews are with the Wireless Networking and Communications Group (WNCG), The University of Texas at Austin, TX, USA (email: dhillon@utexas.edu and jandrews@ece.utexas.edu).} 
\thanks{M. Kountouris is with the Department of Telecommunications, SUPELEC (Ecole Sup{\'e}rieure d'Electricit{\'e}), France (email: marios.kountouris@supelec.fr). \hfill Manuscript updated: \today.
} 
}

\maketitle

\begin{abstract}

We develop a general downlink model for multi-antenna heterogeneous cellular networks (HetNets), where base stations (BSs) across tiers may differ in terms of transmit power, target signal-to-interference-ratio ($\sir$), deployment density, number of transmit antennas and the type of multi-antenna transmission. In particular, we consider and compare space division multiple access (SDMA), single user beamforming  (SU-BF), and baseline single-input single-output (SISO) transmission. For this general model, the main contributions are: (i) ordering results for both coverage probability and per user rate in closed form for any BS distribution for the three considered techniques, using novel tools from stochastic orders, (ii) upper bounds on the coverage probability assuming a Poisson BS distribution, and (iii) a comparison of the area spectral efficiency (ASE).  The analysis concretely demonstrates, for example, that for a given total number of transmit antennas in the network, it is preferable to spread them across many single-antenna BSs vs. fewer multi-antenna BSs.  Another observation is that SU-BF provides higher coverage and per user data rate than SDMA, but SDMA is in some cases better in terms of ASE.  

\end{abstract}

\begin{keywords}
Heterogeneous cellular network, MIMO HetNet, stochastic orders, stochastic geometry, downlink performance analysis.
\end{keywords}

\section{Introduction}
Cellular networks are undergoing a significant makeover to handle increasing data demands prompted by the rising popularity of data hungry applications, such as video streaming and real time video calls~\cite{CisM2012}. Handling these changing usage trends requires much higher data rates than contemporary cellular networks were designed for. Two strategies that stand out to provide such spectral efficiencies are: i) deploy low power nodes to reduce frequency reuse distance, and ii) equip BSs with multiple antennas to enable the use of multiple antenna techniques, such as beamforming and SDMA. Multiple antenna techniques are already relatively mature, being part of multiple wireless standards such as 
IEEE 802.11e WiMAX and 3GPP LTE-A~\cite{LiLiJ2010}, apart from plethora of theoretical research activities in academia~\cite{BolGesB2006}. Similarly, the concept of deploying low power nodes, also termed as small cells or HetNets, has been researched both in industry and academia for a fairly long time, see for example~\cite{MadBorJ2010,AndClaJ2012} and the references therein. The standardization activities for HetNets have also started in 3GPP release 10~\cite{3GPP2012}. These activities clearly indicate that multi-antenna techniques and HetNets will coexist and complement each other in the future wireless networks and should not be studied in isolation, as has been typically done in the literature. In this paper, we address this problem and develop a general tractable model and the corresponding analytical tools for multi-antenna HetNets using techniques from stochastic orders and stochastic geometry.

\subsection{Related Work}
As cellular networks have become denser, they have also become increasingly irregular. This is particularly true for small cells which are deployed opportunistically and in hotspots, and thus highly irregular. As a result, the popular deterministic grid model is increasingly anachronistic for ongoing and future deployments. Even for single-tier networks, the grid model is quite idealized and a perturbed grid model is sometimes used for macrocell locations~\cite{NieLemC2003,MitRosC2012}. 

Although more data is needed to make conclusive statements on which is a better model for HetNets, it seems to many that a random spatial model will often be a more appropriate model versus a deterministic one. In a random spatial model, the BS locations are modeled by a two-dimensional point process, the simplest being the Poisson Point Process (PPP)~\cite{AndBacJ2011,DhiGanJ2012}. This model has the advantages of being scalable to multiple classes of overlaid BSs and accurate to model location randomness, especially that of the small cells. Additionally, powerful tools from stochastic geometry can be used to derive performance results for general multi-tier networks in closed form, which was not even possible for single-tier networks using deterministic grid model~\cite{DhiGanJ2012}. While sufficient progress has been made in modeling single-antenna (SISO) HetNets~\cite{DhiGanJ2012,JoSanJ2012,MukJ2012,MadResC2011}, the efforts to understand multi-antenna HetNets have just begun, e.g., see~\cite{HeaKouJ2013}. 

The main challenge in modeling multi-antenna HetNets is the number of possible multi-antenna techniques to choose from in each tier along with their tractable characterization. As a result, most prior works on multi-antenna HetNets have focused only on two-tier networks. For this paper, the most relevant one is \cite{ChaKouAnd2009}, where SU-BF was shown to achieve better coverage than multiuser linear beamforming on the downlink of femtocell-aided cellular network assuming perfect channel state information (CSI). Random orthogonal beamforming with max-rate scheduling and coordinated beamforming for femtocell underlay networks was analyzed in \cite{Park2010}, \cite{Park2011}, respectively. The effect of channel uncertainty on linear beamforming in two-tier networks was investigated in \cite{AkoKouHea11}. 
In addition to the contributions in cellular networks, there has been extensive work on analyzing multi-antenna techniques in wireless ad hoc networks, which is also related to our work since several tools and techniques developed therein can be employed and extended to HetNets. Several single-user MIMO techniques, such as spatial diversity, open loop transmission and spatial multiplexing, have been studied, see for instance \cite{HunWebAnd08, LouKayCol11,VazHea12}. The performance of multiuser MIMO communication in a Poisson field of interferers, with perfect and quantized CSI at the transmitter was investigated in \cite{KouAnd09} and \cite{KouAnd12}, respectively.

\subsection{Contributions and Outcomes}
\subsubsection*{Downlink model for multi-antenna HetNets} In Section II, we develop a comprehensive downlink model consisting of $K$-tiers or classes of BSs, such as macrocells, femtocells, picocells and distributed antennas. The BSs across tiers differ in terms of transmit power, deployment density, target $\sir$, number of transmit antennas, number of users served in each resource block, and the type of multi-antenna transmission. 
We also consider the possibility of closed subscriber group or closed access in which a typical user is granted access to only a few BSs, while the rest purely act as interferers. 

\subsubsection*{Ordering results for coverage and rate} For general system models, such as the one considered in this paper, it is not always possible to express key performance metrics such as coverage probability and per user rate in closed form.  In the absence of simple analytical expressions, it is difficult to compare different transmission techniques in general HetNet settings. To facilitate this comparison, in Section III we derive {\em ordering results} for both the coverage probability and the rate per user in both open and closed access networks, using tools from stochastic orders. Interested readers can refer to \cite{TepRajJ2011,BlaYogJ2009,MadResJ2012,LeeTepJ2012} for application of stochastic orders to conventional wireless networks. While circumventing the need for deriving coverage and rate expressions, this analysis leads to several system design guidelines, e.g., it concretely demonstrates the superiority of serving a single user in each resource block, either by SISO or SU-BF, as opposed to serving multiple users by SDMA, both in terms of coverage and rate, under a per user power constraint. The BS locations for this analysis may be drawn from any general stationary point process, not necessarily independent across tiers, which is a significant generalization of earlier HetNet models based on the PPP assumption, such as the one considered in~\cite{DhiGanJ2012}.

\subsubsection*{Area spectral efficiency comparison} While comparison of different configurations of multi-antenna HetNets in terms of coverage probability and average rate per user is conclusive from the ordering results, it does not directly capture the fact that some transmission techniques serve more users than the others and hence provide higher sum data rate. In order to capture this effect, in Section IV we additionally consider ASE, which gives the number of bits transmitted per unit area per unit time per unit bandwidth. To facilitate the comparison of transmission techniques in terms of ASE, we first derive an upper bound on the coverage probability of a typical user in both open and closed access networks assuming that the BS locations for each tier are drawn from independent PPPs and show that it can be reduced to a closed form expression for the ``full'' SDMA case (where the number of users served is equal to the number of antennas). A part of this derivation is reported in the shorter version of this paper~\cite{DhiKouC2012}. The tightness of the bound is studied and it is shown that the closed form bound derived for full SDMA is tight down to very low target $\sir$s. Using this expression and the previously known coverage probability results for SISO HetNets~\cite{DhiGanJ2012}, we derive ASE results for various transmission techniques in closed form. Main consequences of this analysis are: i) for the same density of BSs, SISO HetNets have lower ASE than SDMA since they serve fewer users. Interestingly, despite serving fewer users, SU-BF outperforms SDMA in moderate and high target $\sir$ regime, and ii) when the BS densities are adjusted such that all the transmission techniques serve the same density of users, the ASEs of SU-BF, SISO and SDMA follow the same ordering as that of coverage probability and average rate per user.

\section{System Model \label{sec:sysmod}}

\begin{table}
\centering
\caption{Notation Summary}
\label{table:notationtable}
\begin{tabulary}{\columnwidth}{ |c | C | }
\hline
    \textbf{Notation} & \textbf{Description} \\ \hline
    $\Phi_k$					& A point process modeling the locations of $k^{th}$ tier BSs \\ \hline
    $\Phi_u$					& An independent PPP modeling user locations \\ \hline
    $P_k; \lambda_k$			& 	Downlink transmit power to each user; deployment density of the $k^{th}$ tier BSs \\ \hline
    $M_k, \Psi_k$				& Number of transmit antennas; number of users served in each resource block by a $k^{th}$ tier BS \\ \hline
    $h_{kx}$			& Channel power of the direct link from a $k^{th}$ tier BS located at $x$ to a typical user, $h_{kx} \sim \Gamma(\Delta_k, 1)$ with $\Delta_k = M_k - \Psi_k + 1$\\ \hline 
    $g_{jy}$			& Channel power of the interfering link from a $j^{th}$ tier BS located at $y$ to a typical user, $g_{jy} \sim \Gamma(\Psi_y, 1)$ \\ \hline 
    $\ncalK; \{x_k\}$ & $\{1, 2, \ldots, K\}; \{x_{k_1}, x_{k_1+1}, \ldots x_{k_2}\}$, where the values of $k_1$ and $k_2$ will be clear from the context \\ \hline
    $\ncalB$ 				& $\ncalB \subset \ncalK$ denotes the set of open access tiers \\ \hline
    $\pc; \beta_k$			& Coverage probability (in terms of $\sir$); target $\sir$ for $k^{th}$ tier \\ \hline
    $\rc; \ncalO_k, T_k$	& Rate coverage; fraction of resources allocated to each user served by $k^{th}$ tier; $k^{th}$ tier target rate \\ \hline
    $\eta$					&  Area spectral efficiency \\ \hline
    $Z_{k,m}$ 			& $Z_{k,m} = X_1/X_2$, where $X_1 \sim \Gamma(k,1)$ and $X_2 \sim \Gamma(m,1)$ \\ \hline
\end{tabulary}

\end{table}

\subsection{System Setup and BS Location Model}
We consider $K$ different classes or tiers of BSs, indexed by the set $\ncalK = \{1, 2, \ldots, K\}$. The BSs across tiers differ in terms of their transmit power $P_k$ with which they transmit to each user, deployment density $\lambda_k$, target $\sir$ $\T_k$, number of antennas $M_k$ and number of users served by each BS in a given resource block $\Psi_k \leq M_k$, $\forall k \in \ncalK$. While in open access networks a mobile user can connect to any BS, in closed access networks the access is restricted to $\ncalB \subset \ncalK$ tiers. For the ordering part (Section III), the locations of BSs of each tier are drawn from a general stationary point process $\Phi_k$. The point processes $\Phi_k$ are not necessarily independent. Those familiar with the recent advances in modeling cellular networks with random spatial models will immediately recognize that this is a significant generalization over the known models based on the PPP assumption. This is enabled by the fact that the ordering results are based on the ordering of the fading components of the channel power distributions for various setups and do not depend upon the spatial point process governing the locations of the BSs. However, we do require further assumptions for the ASE comparison, which involves the derivation of explicit expressions for coverage probabilities. Therefore, for the ASE analysis we will consider the more familiar independent PPP model, where each tier of BSs is modeled by an independent homogeneous PPP of density $\lambda_k$.

Nevertheless, this is a fairly general model that captures the current deployment trends in $4$G networks, e.g., it is easy to imagine hundreds of femtocells coexisting in each macrocell, transmitting at orders of magnitude lower power than macrocells, having relatively small number of antennas due to smaller form factor, serving smaller number of users due to smaller coverage footprints and providing restricted access to their own users due to a smaller backhaul capacity or privacy concerns. A two-tier illustration of the proposed system model is shown in Fig.~\ref{fig:SystemMod}, where a high power macro tier with four transmit antennas per BS coexists with a low power pico tier with two antennas per BS. Owing to its bigger coverage footprint, each macro BS serves higher load than its pico counterpart~\cite{DhiGanJ2013}. The coverage region of each BS in this illustration corresponds to the region where it provides the maximum average received power, thereby leading to a weighted Voronoi tessellation~\cite{DhiGanJ2012}. 

We model the user locations by an independent PPP $\Phi_u$ and focus on the downlink analysis performed at a single-antenna user located at the origin. This analysis at the origin is facilitated by Slivnyak's theorem, which states that the properties observed at a typical point of $\Phi_u$ are the same as those observed by the point at origin in the point process $\Phi_u \cup \{0\}$~\cite{StoKenB1995}. For the interference, we consider full-buffer model where the interfering BSs are assumed to be always transmitting~\cite{AndBacJ2011,DhiGanJ2012}. More sophisticated load models~\cite{DhiGanJ2013} along with non-uniform user distributions~\cite{DhiGanJ2013b} can also be considered but are out of the scope of this paper.

\begin{figure}[t!]
\centering
\includegraphics[width=.9\columnwidth]{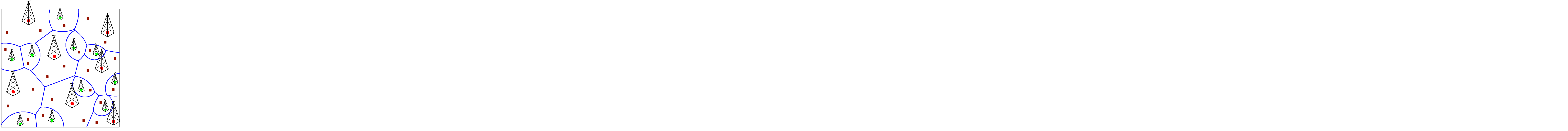}
\caption{An illustration of a possible two-tier multi antenna HetNet configuration, with four antenna macro BSs serving two users and two antenna pico BSs serving one user each. The circles, triangles and rectangles represent macro BSs, pico BSs, and mobile users, respectively.}
\label{fig:SystemMod}
\end{figure}

\subsection{Channel Model}
Before going into the technical details, it is important to understand that the channel power distribution of the link from a multi-antenna BS to a typical single-antenna user depends upon the transmission technique and whether it is a serving BS or an interferer. For example, if it is a serving BS, it may precode its signals for a typical user depending upon the transmission technique, which may lead to a different effective channel distribution from the case when it simply acts as an interferer. Therefore, to develop a general framework in which the BSs across tiers may differ in terms of the number of transmit antennas and the transmission technique, we assume that the channel power for the direct link from a $k^{th}$ tier BS located at $x_k \in \R^2$ to a typical user located at origin is denoted by $h_{kx_k}$ and for the interfering link from a $j^{th}$ tier BS located at $y \in \R^2$ is denoted by $g_{jy}$. In this paper, we assume perfect CSI and focus on zero-forcing precoding, under which for Rayleigh fading it can be argued that the channel power distributions of both the direct and the interfering links follow the Gamma distribution~\cite{HuaPapB2012}. 
As discussed in detail in Appendix~\ref{appendix:signal}, it can therefore be shown that for zero forcing $h_{kx} \sim \Gamma(\Delta_k, 1)$ and $g_{jy} \sim \Gamma(\Psi_j, 1)$, where $\Delta_k = M_k - \Psi_k + 1$. Note that for SISO transmission there is no precoding and hence the channel gains $h_{kx}$ and $g_{kx}$ from a BS to a typical user are the same. Under Rayleigh fading assumption, both follow $\exp(1)$ distribution, which is the same as $\Gamma(1,1)$ distribution. Other precoding techniques, such as minimum mean square error (MMSE), are left for future work.

Although for brevity we limit our discussion to Rayleigh fading channels, other fading distributions under which the channel power for both the desired and the interfering links follow Gamma distribution after precoding, e.g., Nakagami-$m$, can also be studied using the proposed techniques. The shape and the scale parameters for the Gamma distributions corresponding to the channel powers of the desired and interfering links can be derived using techniques well known in the literature, e.g., see~\cite{ZheKaiJ2006}. For concreteness, we will focus on the following three transmission techniques in this paper:
\begin{itemize}
\item {\em SDMA:} In this case, a $k^{th}$ tier BS with $M_k$ antennas serves $\Psi_k > 1$ users in each resource block. When $\Psi_k = M_k$, we term it as full SDMA, for which $\Delta_k=1$.
\item {\em SU-BF:} In this case, a $k^{th}$ tier BS serves $\Psi_k = 1$ users in each resource block.
\item {\em SISO:} Baseline single-antenna case~\cite{DhiGanJ2012}, where each BS serves one user in each resource block.
\end{itemize}

For each transmission technique, the received power at a typical single-antenna user located at origin from the BS located at $x_k\in \Phi_k$ is
\begin{align}
P_r = P_k h_{kx_k} \|x_k\|^{-\alpha},
\end{align}
where $\alpha$ is the path loss exponent, because we assume a per user power constraint in this formulation. Note that although shadowing is not explicitly considered, its effect on downlink performance can be easily incorporated by scaling the deployment density of each BS class with a constant factor that depends upon the shadowing distribution, as discussed in details in~\cite{DhiAndJ2013}. The received $\sir$ can now be expressed as
\begin{align}
\sir(x_k) = \frac{P_k h_{kx_k} \|x_k\|^{-\alpha}}{\sum_{k\in \ncalK} \sum_{y\in \Phi_j \setminus x_k} P_j g_{jy} \|y\|^{-\alpha}}.
\label{eq:SIR_def}
\end{align}
For notational simplicity, we assume that the thermal noise is negligible as compared to the self interference and is hence ignored. This is justified in the current wireless networks, which are typically interference limited~\cite{BouPanJ2009}. As will be evident from our analysis, thermal noise can be included in the proposed framework with little extra work. For cell association, we assume that the set of the candidate serving BSs is the collection of the BSs that provide the strongest instantaneous received power from each tier to which a typical mobile is allowed to connect. A typical user is said to be in coverage if the received $\sir$ from one of these candidate serving BSs is more than the respective target $\sir$, as discussed in detail in the next section. We conclude this section with a remark that although we consider perfect CSI, it is possible to relax this assumption to study the effect of imperfect CSI on the performance of multi-antenna HetNets, as discussed below.

\begin{remark}[Imperfect CSI]
Using tools from \cite{Jindal06, KouAnd12, ZhangKou:11}, it is possible to derive the received channel power and interference statistics for both SU-BF and SDMA under quantized channel directional information (CDI). In particular, in a system where each user reports CDI using $B$ feedback bits, the desired channel gain is exponentially distributed for both SU-BF and SDMA. However, in the latter case, the inter-user interference is not completely eliminated due to zero-forcing beamforming using imperfect CSI. Therefore, an additional interference term, independent of the multi-tier interference, appears in the denominator of~\eqref{eq:SIR_def}, which is distributed as $\Gamma(\Psi_k,\delta)$ with $\delta = 2^{-\frac{B}{M_k-1}}$ under quantization cell approximation~\cite{KouAnd12}.
\end{remark}

\section{Ordering Results for Coverage and Rate \label{sec:ordering}}
This is the first main technical section of this paper, where we compare the performance of various transmission techniques in terms of coverage probability and rate per user. We first study coverage probability in detail and then show that the analysis can be easily extended to study rate per user. We begin by formally defining the coverage probability. 

\begin{ndef}[Coverage probability] \label{def:pc}
A typical user in an open access network is said to be in coverage if its downlink $\sir$ from at least one of the BSs is higher than the corresponding target. This can be mathematically expressed as
\begin{align}
\pc = \P\left(\bigcup_{k\in \ncalK} \max_{x_k \in \Phi_k} \sir(x_k) > \T_k \right).
\end{align}
The coverage probability can be equivalently defined as the average area in coverage or the average fraction of users in coverage. For closed access networks, the definition remains the same, except that the union is now over the set of tiers $\ncalB \subset \ncalK$ to which a typical user is allowed to connect.
\end{ndef}

\begin{remark} [Open access vs. closed access coverage]
The coverage probability in open access networks is always higher than in closed access networks. It follows directly by definition of coverage probability along with the fact that for $\ncalB \subset \ncalK$
\begin{align}
&\nb1 \left( \bigcup_{k\in \ncalB} \max_{x_k \in \Phi_k} \sir(x_k) > \T_k \right) \nonumber \\
&\leq \nb1 \left( \bigcup_{k\in \ncalK} \max_{x_k \in \Phi_k} \sir(x_k) > \T_k\right),
\end{align}
where the indicator function $\nb1(\ncalE)$ is $1$ when event $\ncalE$ holds and $0$ otherwise.
\end{remark}


Owing to the complexity of the system model considered in this paper, it is not always possible to express coverage probability in simple closed form for any general system configuration, especially when the BS locations are drawn from a general point process. As evident from our analysis in the next section, this presents the first main challenge in comparing various transmission techniques. In this section, we take a slightly different view of this problem and focus on ``ordering'' the relative performance of different system configurations using tools from stochastic orders. Interested readers can refer to~\cite{ShaShaB2007} for details on stochastic orders. It is important to note that stochastic orders operate on random variables, as opposed to related majorization theory, which defines partial order on deterministic vectors~\cite{MarOlkB2009}.

This powerful approach allows insights into the relative performance of different transmission techniques, while circumventing the need to evaluate complicated expressions for the performance metrics such as coverage and rate. We begin by defining first order stochastic dominance as follows.

\begin{ndef}[First order stochastic dominance] For any two random variables (rvs) $Z_1$ and $Z_2$, $Z_1$ (first order) stochastically dominates $Z_2$ if and only if\begin{align}
\P[Z_1 > z] \geq \P[Z_2 > z],\ \forall z.
\end{align}
Equivalently, $Z_1$ is greater than $Z_2$ in the usual stochastic order and is denoted by $Z_1 \geq_{\rm st} Z_2$.
\end{ndef}

Therefore, $Z_1 \geq_{\rm st} Z_2$ if and only if the complementary cumulative distribution function (CCDF) of $Z_1$ dominates that of $Z_2$ over the whole range. It is intuitively clear and will be made precise later in this section that the proper understanding of the ordering of received $\sir$ for different system configurations plays a central role in studying their coverage and rate ordering. The main technical idea behind the proposed ordering approach is to condition on the distribution of the BS locations and then order the received $\sir$s for different transmission techniques by ordering the fading components of the channel powers of both the desired and the interfering links. This idea of ordering the channel power distributions has been previously used in the literature to compare the performance of wireless links in terms of signal-to-noise-ratio ($\snr$) and related metrics such as ergodic capacity and error rates for different modulation schemes, e.g., see~\cite{TepRajJ2011} and references therein. However, to the  best of our understanding, this approach has never been used for $\sir$ ordering in the context of HetNets. Now note that the received $\sir$ can be alternatively expressed as
\begin{align}
\sir(x_k) = \frac{P_k \|x_k\|^{-\alpha}}{\sum_{k\in \ncalK} \sum_{y\in \Phi_j \setminus x_k} P_j g_{jy}/h_{kx_k} \|y\|^{-\alpha}},
\label{eq:SIR_alternate}
\end{align}
where the randomness due to propagation channel is lumped into ratios of Gamma random variables $h_{kx_k}$ and $g_{jy}$. Therefore, it turns out that it is important first to understand the ordering of the ratios of Gamma random variables, which is studied next.

\subsection{Ordering of the Ratios of Gamma Random Variables}
For concreteness, define the ratio of two random variables $X_1\sim \Gamma(k, 1)$ and $X_2 \sim \Gamma(m, 1)$ by $Z_{k,m} = X_1/X_2$. It is easy to derive the cumulative distribution function (CDF) of $Z_{k,m}$ using basic algebraic manipulations and is given by
\begin{align}
F_{Z_{k,m}} (z) = 1 - \frac{1}{\Gamma(m)}\sum_{i=0}^{k-1} \frac{\Gamma(m+i)}{\Gamma(1+i)} \frac{z^i}{(z+1)^{m+i}}.
\label{eq:Gamma_CDF}
\end{align}
Note that the ratios of Gamma random variables are known in much more general settings, e.g.,~\cite{ProJ1989} studies the distribution of the ratio of the powers of two, possibly dependent, random variables where both come from Gamma family, but these generalizations are not required in this paper. The form of the distribution function~\eqref{eq:Gamma_CDF} is such that for a given $k_1, m_1$ and $k_2, m_2$, it is not easy to derive conditions on these variables under which the CCDF of one ratio $Z_{k,m}$ dominates that of the other over the whole range of $z$. Therefore, the above result is of little help in providing direct ordering of two random variables $Z_{k_1, m_1}$ and $Z_{k_2, m_2}$. We take an indirect route, which uses the following technical result about the equivalence in distribution of the Gamma random variable and the sum of i.i.d. exponentially distributed random variables. Given this technical result, the main result can then be proved using {\em coupling} arguments. We first state the equivalence result, which is well-known and can be easily verified using characteristic functions. We then remark on how to use coupling arguments to establish stochastic dominance before stating the main result.

\begin{lemma} \label{lem:equivalence}
For i.i.d $\{X_i\}$, with $X_i \sim \exp(1)$, the random variable $X = \sum_{i=1}^k X_i$ is $X \sim \Gamma (k,1)$.
\end{lemma}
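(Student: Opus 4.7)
The plan is to use the uniqueness of characteristic functions (equivalently, moment generating functions), which is by far the cleanest route since the sum of independent random variables corresponds to a product of their characteristic functions. First I would compute the characteristic function of a single exponential with unit rate, namely $\varphi_{X_i}(t) = \E[e^{\nrmi t X_i}] = \int_0^\infty e^{\nrmi t x} e^{-x} \nrmd x = (1 - \nrmi t)^{-1}$, valid for all real $t$. By the i.i.d. assumption and the multiplicative property of characteristic functions of independent sums, I would then conclude that $\varphi_X(t) = \prod_{i=1}^k \varphi_{X_i}(t) = (1 - \nrmi t)^{-k}$.

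Next I would identify this as the characteristic function of a $\Gamma(k,1)$ random variable. This can be verified directly from the density $f_Y(y) = y^{k-1} e^{-y}/\Gamma(k)$ for $y > 0$, whose characteristic function evaluates (by analytic continuation of the gamma integral, or by a straightforward contour/substitution argument) to $(1 - \nrmi t)^{-k}$. The final step is to invoke the uniqueness theorem for characteristic functions, which guarantees that $X$ and $Y$ have the same distribution whenever their characteristic functions agree on the real line, yielding $X \sim \Gamma(k,1)$.

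As a shorter alternative I would mention induction on $k$: the base case $k = 1$ is the definition of the exponential distribution as $\Gamma(1,1)$, and the inductive step follows by convolving the density of a $\Gamma(k-1,1)$ variable with an independent $\exp(1)$ and recognizing the resulting beta-type integral as the $\Gamma(k,1)$ density. No step here is truly an obstacle since the result is classical; the only mild care needed is to verify the characteristic-function evaluation for the Gamma density, which can be sidestepped entirely by using the moment generating function $\E[e^{tX_i}] = (1-t)^{-1}$ for $t < 1$ and noting that MGFs that agree on a neighborhood of the origin determine the distribution.
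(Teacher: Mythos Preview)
Your proposal is correct and aligns exactly with the paper's approach: the paper does not give a proof at all, stating only that the result ``is well-known and can be easily verified using characteristic functions,'' which is precisely the route you outline. Your additional inductive alternative is a nice touch but unnecessary here.
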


\begin{remark}[Using coupling to establish stochastic dominance]
One way to prove $Z_1 \geq_{\rm st} Z_2$ is to find two random variables $Z^*_{1}$ and $Z^*_{2}$ with the same distributions as $Z_{1}$ and $Z_{2}$, respectively, such that it is always the case that $Z^*_{1} \geq Z^*_{2}$. This approach of using the same source of randomness to generate two random variables $Z^*_{1}$ and $Z^*_{2}$ satisfying the above relation and thereby establishing the stochastic dominance result is termed as {\em coupling}~\cite{RosB2011}.
\end{remark}

We now prove the following result on the ordering of the ratios of the Gamma random variables.

\begin{lemma}[Ordering of the ratios of Gamma rvs] \label{lem:gamma_stocorder}
A random variable $Z_{k_1,m_1}$ defined as the ratio of two Gamma random variables  (first order) stochastically dominates $Z_{k_2,m_2}$ if $k_1 \geq k_2$ and $m_1 \leq m_2$.
\end{lemma}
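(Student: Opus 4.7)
The plan is to avoid manipulating the complicated CDF in \eqref{eq:Gamma_CDF} directly and instead establish the stochastic dominance through a coupling argument built on Lemma~\ref{lem:equivalence}. The idea is that since any $\Gamma(n,1)$ variable can be realized as a sum of $n$ i.i.d. $\exp(1)$ variables, stretching or shrinking $n$ corresponds to adding or removing nonnegative summands, which gives pathwise monotonicity of the numerator and denominator of $Z_{k,m}$ in $k$ and $m$ respectively.

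Concretely, I would fix two independent sequences $\{E_i\}_{i\geq 1}$ and $\{F_j\}_{j\geq 1}$ of i.i.d.\ $\exp(1)$ random variables on a common probability space, and then define
\begin{align*}
Z^*_{k_1,m_1} \nbydef \frac{\sum_{i=1}^{k_1} E_i}{\sum_{j=1}^{m_1} F_j}, \qquad
Z^*_{k_2,m_2} \nbydef \frac{\sum_{i=1}^{k_2} E_i}{\sum_{j=1}^{m_2} F_j}.
\end{align*}
By Lemma~\ref{lem:equivalence}, $Z^*_{k_1,m_1}$ has the same distribution as $Z_{k_1,m_1}$ and $Z^*_{k_2,m_2}$ has the same distribution as $Z_{k_2,m_2}$, so it suffices to compare $Z^*_{k_1,m_1}$ and $Z^*_{k_2,m_2}$ pathwise.

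The hypotheses $k_1\geq k_2$ and $m_1\leq m_2$ give, almost surely,
\begin{align*}
\sum_{i=1}^{k_1} E_i \;=\; \sum_{i=1}^{k_2} E_i \;+\; \sum_{i=k_2+1}^{k_1} E_i \;\geq\; \sum_{i=1}^{k_2} E_i,
\end{align*}
since the $E_i$ are nonnegative, and analogously $\sum_{j=1}^{m_1} F_j \leq \sum_{j=1}^{m_2} F_j$. Dividing a larger (nonnegative) numerator by a smaller (strictly positive) denominator yields $Z^*_{k_1,m_1} \geq Z^*_{k_2,m_2}$ almost surely, which by the definition in the excerpt gives $Z_{k_1,m_1} \geq_{\rm st} Z_{k_2,m_2}$.

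The argument has no real obstacle: the only subtlety worth flagging is that after coupling, $Z^*_{k_1,m_1}$ and $Z^*_{k_2,m_2}$ are highly dependent, which is immaterial because the usual stochastic order depends only on marginal CCDFs. It is also worth noting why this approach works where a direct attack on \eqref{eq:Gamma_CDF} does not: the CCDF is a finite sum of terms whose signs as functions of $(k,m)$ are not obvious, but the coupling reduces the comparison to the monotonicity of a sum of nonnegative quantities, which is immediate.
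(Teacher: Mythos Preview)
Your proposal is correct and is essentially the same coupling argument the paper gives: represent each $\Gamma(n,1)$ as a sum of i.i.d.\ $\exp(1)$ variables via Lemma~\ref{lem:equivalence}, build $Z^*_{k_1,m_1}$ and $Z^*_{k_2,m_2}$ from the same exponential sequences, and use $k_1\geq k_2$, $m_1\leq m_2$ to obtain the pathwise inequality. Your additional remarks on the irrelevance of the induced dependence and on why a direct attack on~\eqref{eq:Gamma_CDF} is unattractive are accurate and add useful context.
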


\begin{proof}
Using the equivalence in distribution of the Gamma random variable and the sum of exponential random variables given by Lemma~\ref{lem:equivalence}, we can generate a random variable $Z^*_{k_1,m_1}$ with the same distribution as $Z_{k_1,m_1}$ as follows
\begin{align}
Z^*_{k_1,m_1} = \frac{\sum_{i=1}^{k_1} Y_{1,i}}{\sum_{j=1}^{m_1} Y_{2,i}},
\end{align}
where $\{Y_{m,n}\}$ is the set of i.i.d. random variables such that $Y_{m,n} \sim \exp(1)$. This equivalent representation will facilitate the use of standard coupling arguments, under which the goal now is to generate another random variable $Z^*_{k_2,m_2}$ with the same sources of randomness as that of $Z^*_{k_1,m_1}$, which has the same distribution as $Z_{k_2,m_2}$, and show that $Z^*_{k_1,m_1} \geq Z^*_{k_2,m_2}$. Under the condition $k_1 \geq k_2$, this can be achieved by expressing $Z^*_{k_1,m_1}$ as follows
\begin{align}
Z^*_{k_1,m_1} &= \frac{\sum_{i=1}^{k_2} Y_{1,i} + \sum_{i=k_2+1}^{k_1} Y_{1,i}}{\sum_{j=1}^{m_1} Y_{2,i}}\\
&\geq \frac{\sum_{i=1}^{k_2} Y_{1,i}}{\sum_{j=1}^{m_1} Y_{2,i}} 
\stackrel{(a)}{\geq} \frac{\sum_{i=1}^{k_2} Y_{1,i}}{\sum_{j=1}^{m_2} Y_{2,i}}
\stackrel{(b)}{=} Z^*_{k_2,m_2},
\end{align}
where $(a)$ follows from the condition $m_1 \leq m_2$, and $(b)$ from Lemma~\ref{lem:equivalence}. This completes the proof.
\end{proof}

\begin{figure}[t!]
\centering
\includegraphics[width=\columnwidth]{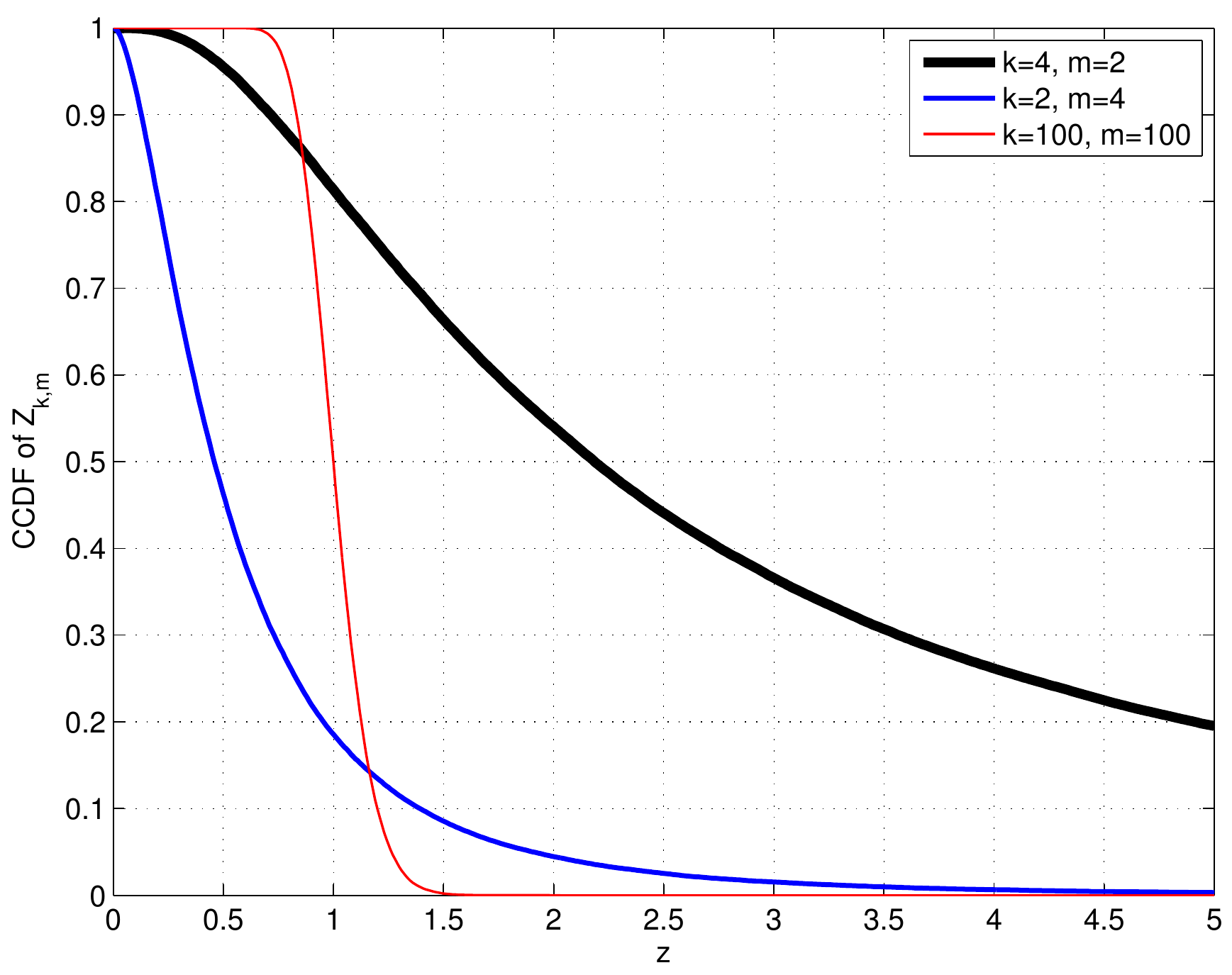}
\caption{The CCDFs of the ratios of Gamma random variables for different shape parameters $k$ and $m$.}
\label{fig:GammaRatio}
\end{figure}

\begin{remark} The set of conditions $k_1 \geq k_2$ and $m_1 \leq m_2$ is stronger than the single condition $k_1/m_1 \geq k_2/m_2$, which first comes to mind from the equivalence of Gamma random variables and the sum of exponential random variables stated in Lemma~\ref{lem:equivalence}. In fact it is easy to argue that the above stochastic dominance is not always true if the only condition on the variables is $k_1/m_1 \geq k_2/m_2$. For instance, consider a case where $k_2 \gg k_1$ and $m_2 \gg m_1$ such that $k_1/m_1 \geq k_2/m_2$. The distribution of $Z_{k_2,m_2}$ is concentrated around its mean and cannot be dominated by the distribution of $Z_{k_1,m_1}$ due to significant difference in their shapes. This is illustrated in Fig.~\ref{fig:GammaRatio}, where the ratio of Gamma random variables with shape parameters $k_1=4$ and $m_1=2$ does not dominate the one with $k_2=100$ and $m_2=100$ due to concentration, although $k_1/m_1 \geq k_2/m_2$ holds.
\end{remark}

To prove the main ordering results of this section, we need to extend the stochastic dominance result of two random variables to multi-variate function of random variables. The result is given in the following Lemma and follows directly from the coupling arguments~\cite{RosB2011}.

\begin{lemma} \label{lem:func_stocorder}
If $X_i \geq_{\rm st} Y_i$ for all $1\leq i \leq n$, then
\begin{align}
\nbbE[g(X_1, X_2, \ldots, X_n)] \geq \nbbE[g(Y_1,Y_2,\ldots,Y_n)],
\end{align}
for all multi-variate functions $g$ that are non-decreasing in each component.
\end{lemma}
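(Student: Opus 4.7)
The plan is to reduce the multivariate expectation inequality to a pointwise inequality via coupling, along the lines sketched in the Remark that precedes the statement. First, I would invoke the univariate coupling characterization of usual stochastic dominance: for each $i \in \{1, \ldots, n\}$, since $X_i \geq_{\rm st} Y_i$, one can construct a pair $(X_i^\star, Y_i^\star)$ on a common probability space --- concretely via the quantile coupling $X_i^\star = F_{X_i}^{-1}(U_i)$ and $Y_i^\star = F_{Y_i}^{-1}(U_i)$ with $U_i$ uniform on $[0,1]$ --- satisfying $X_i^\star \stackrel{d}{=} X_i$, $Y_i^\star \stackrel{d}{=} Y_i$, and $X_i^\star \geq Y_i^\star$ almost surely (since $F_{X_i}^{-1}(u) \geq F_{Y_i}^{-1}(u)$ for every $u \in (0,1)$ by the CCDF inequality).

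Second, I would assemble these $n$ coupled pairs into a single joint coupling by taking $U_1, \ldots, U_n$ to be mutually independent on a product probability space. In the setting in which this lemma is to be applied, the components of $(X_1, \ldots, X_n)$ (and of $(Y_1, \ldots, Y_n)$) encode independent fading gains on distinct BS--user links conditional on the BS point process, so this product construction reproduces the correct joint laws,
\[
(X_1^\star, \ldots, X_n^\star) \stackrel{d}{=} (X_1, \ldots, X_n), \qquad (Y_1^\star, \ldots, Y_n^\star) \stackrel{d}{=} (Y_1, \ldots, Y_n),
\]
while simultaneously enforcing $X_i^\star \geq Y_i^\star$ a.s.\ for every $i$.

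Third, I would exploit coordinate-wise monotonicity of $g$ through the telescoping chain
\[
g(X_1^\star, X_2^\star, \ldots, X_n^\star) \geq g(Y_1^\star, X_2^\star, \ldots, X_n^\star) \geq \cdots \geq g(Y_1^\star, Y_2^\star, \ldots, Y_n^\star),
\]
where the $i$-th inequality replaces $X_i^\star$ by the smaller $Y_i^\star$ with all other coordinates held fixed and follows from $g$ being non-decreasing in its $i$-th argument. Taking expectations on both sides, and invoking the marginal matching established in the second step, yields $\mathbb{E}[g(X_1, \ldots, X_n)] \geq \mathbb{E}[g(Y_1, \ldots, Y_n)]$.

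The main conceptual obstacle is the joint-coupling step: marginal stochastic dominance $X_i \geq_{\rm st} Y_i$ says nothing a priori about arbitrary joint distributions of the tuples, so the product construction tacitly assumes compatible (in practice, independent) coordinate structures on the two sides. For the uses of this lemma in the remainder of the paper --- comparing $\sir$'s across transmission techniques with i.i.d.\ Gamma fading on distinct links --- this independence holds conditionally on the BS point process, so no additional care is needed; once the joint coupling is in place, the monotonicity argument is essentially mechanical.
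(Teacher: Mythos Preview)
Your proposal is correct and matches the paper's approach: the paper does not prove this lemma at all but simply states that it ``follows directly from the coupling arguments'' in~\cite{RosB2011}, which is precisely the quantile-coupling-plus-monotonicity argument you have written out. Your observation that the product coupling tacitly requires independence across coordinates is well taken --- the lemma as stated is not true for arbitrary joint laws --- and, as you note, this independence holds in every application the paper makes of the lemma (i.i.d.\ fading gains conditional on the point process).
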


\subsection{Coverage Probability Ordering}

Using Lemma~\ref{lem:func_stocorder}, we now derive the following general result on the coverage probability ordering in $K$-tier open access multi-antenna HetNets. As evident from the analysis and remarked later in this section, all the results and insights carry over to closed access networks as well. Recall that the goal of this analysis is to compare or ``order'' the performance of different systems and not to obtain the exact expressions for the performance metrics in any given system.

\begin{theorem} [Coverage ordering in open access networks] \label{thm:Pc_stocorder}
The coverage probability of a $K$-tier open access HetNet with system parameters $\{\Delta_k\}$ and $\{\Psi_k\}$ is higher than or equal to the one with system parameters $\{\Delta'_k\}$ and $\{\Psi'_k\}$ if $\Delta_k \geq \Delta'_k$ and $\Psi_k \leq \Psi'_k$ for $k \in \ncalK$.
\end{theorem}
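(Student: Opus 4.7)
The plan is to condition on the realization of the BS point processes $\{\Phi_k\}_{k \in \ncalK}$ and then construct a pathwise coupling between the primed and the unprimed systems under which the coverage event of the unprimed system almost surely contains that of the primed system. Since the $\Phi_k$ are common to both systems (the statement only changes $\Delta_k,\Psi_k$), this is a comparison purely about the fading randomness, which is what the lemmas in this section are tailored to.

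First I would observe that, for fixed BS locations, the indicator of the coverage event in Definition~\ref{def:pc} is a monotone function of the fading variables: strictly increasing in each desired-link power $h_{kx_k}$ through the numerator of $\sir(x_k)$ in~\eqref{eq:SIR_def}, and decreasing in each interfering-link power $g_{jy}$ through the interference sum in the denominator. Monotonicity is preserved under the maximum over $x_k \in \Phi_k$ and under the union over $k \in \ncalK$, so viewed as a function of the vector whose entries are the $h_{kx}$'s and the $-g_{jy}$'s, the coverage indicator is non-decreasing in every coordinate.

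Next I would lift the marginal Gamma comparisons to a joint pointwise comparison using the coupling technique already employed inside the proof of Lemma~\ref{lem:gamma_stocorder}. For each $(k,x)$ with $x\in\Phi_k$, introduce i.i.d.\ $\exp(1)$ variables $\{H_{kx,i}\}_{i\geq 1}$ and define $h_{kx}=\sum_{i=1}^{\Delta_k}H_{kx,i}$ and $h'_{kx}=\sum_{i=1}^{\Delta'_k}H_{kx,i}$; carry out the analogous construction (with independent families $\{G_{jy,i}\}$) for the interfering powers. Lemma~\ref{lem:equivalence} guarantees the correct Gamma marginals in both systems, while the hypotheses $\Delta_k\geq\Delta'_k$ and $\Psi_j\leq\Psi'_j$ yield $h_{kx}\geq h'_{kx}$ and $g_{jy}\leq g'_{jy}$ almost surely. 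Combined with the monotonicity above, this gives $\sir(x_k)\geq\sir'(x_k)$ pointwise for every $x_k$, hence $\max_{x_k\in\Phi_k}\sir(x_k)\geq\max_{x_k\in\Phi_k}\sir'(x_k)$, and finally event-wise inclusion of the primed coverage event in the unprimed one. Taking expectations---first over the fading conditional on $\{\Phi_k\}$, then over the point processes themselves---converts this inclusion into $\pc\geq\pc'$. The last step can equivalently be phrased as a direct invocation of Lemma~\ref{lem:func_stocorder} conditional on $\{\Phi_k\}$, since the family $\{h_{kx}\}\cup\{g_{jy}\}$ is independent across distinct $(k,x)$ and $(j,y)$ and the coverage indicator is coordinate-wise monotone.

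The only real obstacle I foresee is the dependence that would appear if one worked instead with the ratio representation~\eqref{eq:SIR_alternate}: the collection $\{g_{jy}/h_{kx_k}\}_{j,y}$ shares the common denominator $h_{kx_k}$, so these ratios are not mutually independent and Lemma~\ref{lem:func_stocorder} cannot be applied to them in a coordinate-wise fashion. The coupling described above sidesteps this difficulty by ordering the underlying $h$'s and $g$'s themselves---where genuine independence across distinct BSs holds---before propagating the inequality through the monotone map that yields $\sir$. The same argument carries over verbatim to closed access by replacing $\ncalK$ with $\ncalB\subset\ncalK$ in the union defining coverage, since none of the monotonicity or coupling steps depend on which subset of tiers a typical user is allowed to connect to.
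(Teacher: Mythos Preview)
Your argument is correct and, in fact, cleaner than the paper's own proof. The paper rewrites $\sir(x_k)$ via~\eqref{eq:SIR_alternate}, treats the coverage indicator as a function of the ratio variables $Z_{jk}=g_{jy}/h_{kx_k}$, invokes Lemma~\ref{lem:gamma_stocorder} to obtain $Z_{jk}\leq_{\rm st}Z'_{jk}$ marginally, and then appeals to Lemma~\ref{lem:func_stocorder}. You instead couple the underlying Gamma variables $h_{kx}$ and $g_{jy}$ directly through their sum-of-exponentials representation and push the resulting pointwise inequalities through the monotone coverage indicator.

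The difference is not merely cosmetic. As you point out, for a fixed candidate BS $x_k$ the ratios $\{g_{jy}/h_{kx_k}\}_{j,y}$ all share the same denominator and are therefore dependent; Lemma~\ref{lem:func_stocorder} in its standard form (coupling of independent coordinates, as in the cited reference) does not apply to such a family without further argument. The paper's proof glosses over this. Your route avoids the issue entirely because the collection $\{h_{kx}\}\cup\{g_{jy}\}$ \emph{is} independent across distinct BSs, so the coordinate-wise coupling is legitimate and the monotonicity of the indicator in $(h,-g)$ finishes the job. In effect, your argument also shows how to repair the paper's: one can still pass through the ratios, but the required joint coupling of the $Z_{jk}$ must be built from the common exponential atoms exactly as you do, rather than inferred from marginal dominance alone.
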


\begin{proof}
By definition, the coverage probability for open access networks can be expressed as
\begin{align}
\pc &= \E \mathbf{1}\left(\bigcup_{k\in \ncalK} \max_{x_k \in \Phi_k} \frac{P_k h_{kx_k} \|x_k\|^{-\alpha}}{\sum\limits_{k\in \ncalK} \sum\limits_{y\in \Phi_j \setminus x_k} P_j g_{jy} \|y\|^{-\alpha}} > \T_k \right)\\
&= \E \mathbf{1}\left(\bigcup_{k\in \ncalK} \max_{x_k \in \Phi_k} \frac{P_k \|x_k\|^{-\alpha}}{\sum\limits_{k\in \ncalK} \sum\limits_{y\in \Phi_j \setminus x_k} P_j Z_{jk} \|y\|^{-\alpha}} > \T_k \right),
\label{eq:pc_stocproof}
\end{align}
where with slight abuse of notation (dropping the BS location from the subscript),  $Z_{jk} = g_{jy}/h_{kx_k}$ is defined as the ratio of two Gamma random variables corresponding to the $K$-tier HetNet with system parameters $\{\Delta_k\}$ and $\{\Psi_k\}$. Denote the corresponding ratio for the other system setup by $Z'_{jk}$. By Lemma~\ref{lem:gamma_stocorder}, $Z'_{jk} \geq_{\rm st} Z_{jk}$ if $\Psi'_j \geq \Psi_k$ and $\Delta'_j \leq \Delta_k$. Now it is easy to show that the indicator function in \eqref{eq:pc_stocproof}
\begin{align}
g(\{Z_{jk}\}) = \mathbf{1}\left(\bigcup_{k\in \ncalK} \max_{x_k \in \Phi_k} \frac{P_k \|x_k\|^{-\alpha}}{\sum\limits_{k\in \ncalK} \sum\limits_{y\in \Phi_j \setminus x_k} P_j Z_{jk} \|y\|^{-\alpha}} > \T_k \right), \nonumber
\end{align}
is an element wise decreasing function of $Z_{jk}$. Therefore, by Lemma~\ref{lem:func_stocorder} the result follows.
\end{proof}
Using this result, we can make some general comments about the coverage probability in certain realistic deployments. We begin by studying the effect of the number of users served in each tier on coverage probability.

\begin{cor}[Effect of number of users]
For two different $K$-tier open access HetNets, with the same number of antennas in each corresponding tier, the one that serves less users in each tier than the other provides higher coverage due to higher beamforming gain.
\end{cor}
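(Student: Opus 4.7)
The plan is to derive this as a direct corollary of Theorem~\ref{thm:Pc_stocorder}. First I would set up the two HetNet configurations being compared so that they share the same number of transmit antennas per tier, namely $M_k = M'_k$ for every $k \in \ncalK$, but the first configuration serves fewer users per resource block in each tier than the second, i.e., $\Psi_k \leq \Psi'_k$ for all $k$. These are exactly the hypotheses built into the corollary; the rest is a matter of translating them into the parameter pair $(\{\Delta_k\}, \{\Psi_k\})$ on which Theorem~\ref{thm:Pc_stocorder} operates.

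Next I would compute the effective desired-link shape parameters using the identity $\Delta_k = M_k - \Psi_k + 1$ from the channel model. Because $M_k = M'_k$ and $\Psi_k \leq \Psi'_k$, one obtains
\begin{align}
\Delta_k \;=\; M_k - \Psi_k + 1 \;\geq\; M'_k - \Psi'_k + 1 \;=\; \Delta'_k,
\end{align}
for every $k$. Thus both conditions of Theorem~\ref{thm:Pc_stocorder}, namely $\Delta_k \geq \Delta'_k$ and $\Psi_k \leq \Psi'_k$, are satisfied simultaneously, and the theorem delivers the desired coverage ordering $\pc \geq \pc'$.

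The interpretation matches the ``higher beamforming gain'' phrasing in the statement: lowering $\Psi_k$ while keeping $M_k$ fixed enlarges the shape parameter $\Delta_k$ of the desired-link Gamma channel power, reflecting more array degrees of freedom focused on each served user, and simultaneously shrinks the shape parameter $\Psi_j$ of the interfering-link Gamma channel powers. By Lemma~\ref{lem:gamma_stocorder}, both effects push the ratios $Z_{jk} = g_{jy}/h_{kx_k}$ downward in the usual stochastic order, hence the $\sir$ upward, irrespective of the underlying stationary point process governing the BS locations.

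The only thing to verify is the pair of inequalities $\Delta_k \geq \Delta'_k$ and $\Psi_k \leq \Psi'_k$, which is immediate once $M_k$ is held fixed across the two networks; there is no real obstacle, and no additional coupling argument beyond those already assembled in Lemma~\ref{lem:equivalence}, Lemma~\ref{lem:gamma_stocorder}, Lemma~\ref{lem:func_stocorder}, and Theorem~\ref{thm:Pc_stocorder} is required. The same argument carries over verbatim to closed access, replacing $\ncalK$ by $\ncalB$ in the union defining $\pc$.
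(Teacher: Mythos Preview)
Your proposal is correct and follows exactly the same route as the paper: the paper's justification is simply that with equal antenna counts, serving fewer users gives $\Delta_k \geq \Delta'_k$ and $\Psi_k \leq \Psi'_k$, so Theorem~\ref{thm:Pc_stocorder} applies. Your write-up is a more detailed version of the same one-line argument, with the relation $\Delta_k = M_k - \Psi_k + 1$ made explicit.
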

The proof of the above corollary directly follows from the fact that under the same number of antennas for two setups, the one that serves less users in each tier has $\Delta_k \geq \Delta'_k$ and $\Psi_k \leq \Psi'_k$ for each tier, leading to higher coverage. An important extension of the above corollary is the comparison of the SDMA with SU-BF systems when the number of transmit antennas in each corresponding tier are the same. The result is stated as the following corollary.

\begin{cor} [SDMA vs. SU-BF]
For two $K$-tier HetNets, with the same number of antennas in each corresponding tier, one performing SU-BF in each tier and the other performing SDMA, the coverage probability in the SU-BF case will always be higher. 
\end{cor}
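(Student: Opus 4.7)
The plan is to obtain the corollary as an immediate specialization of Theorem~\ref{thm:Pc_stocorder}. Let $M_k$ denote the common number of transmit antennas per BS in tier $k$ across the two systems. For the SU-BF system every tier-$k$ BS serves $\Psi_k = 1$ user, so by the definition $\Delta_k = M_k - \Psi_k + 1$ recorded in Table~\ref{table:notationtable}, the direct-link shape parameter satisfies $\Delta_k = M_k$. For the SDMA system every tier-$k$ BS serves $\Psi'_k > 1$ users, hence $\Delta'_k = M_k - \Psi'_k + 1 < M_k$.

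Next I would verify the two hypotheses of Theorem~\ref{thm:Pc_stocorder} componentwise. For each $k \in \ncalK$,
\begin{align}
\Delta_k &= M_k \;\geq\; M_k - \Psi'_k + 1 = \Delta'_k,\\
\Psi_k &= 1 \;\leq\; \Psi'_k,
\end{align}
the first inequality holding because $\Psi'_k \geq 1$ and the second because $\Psi'_k \geq 2$ for SDMA. Applying Theorem~\ref{thm:Pc_stocorder} with the SU-BF configuration playing the role of the unprimed (dominating) system and the SDMA configuration playing the role of the primed (dominated) system then immediately yields that the SU-BF coverage probability is at least the SDMA coverage probability, which is the claim.

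There is essentially no technical obstacle beyond matching parameters: all of the stochastic-order machinery in Lemmas~\ref{lem:equivalence}--\ref{lem:func_stocorder} is already bundled inside Theorem~\ref{thm:Pc_stocorder}, and the only point that requires care is keeping the shift $\Delta_k = M_k - \Psi_k + 1$ straight when converting between the physical parameters $(M_k, \Psi_k)$ and the shape parameters $(\Delta_k, \Psi_k)$ that drive the ordering. I would close with the observation that the same derivation works verbatim for closed access networks, because the indicator in~\eqref{eq:pc_stocproof} is a componentwise decreasing function of the fading ratios $Z_{jk}$ regardless of whether the outer union is taken over $\ncalK$ or over a subset $\ncalB \subset \ncalK$.
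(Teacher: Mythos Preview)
Your proposal is correct and follows exactly the approach the paper takes: the corollary is stated as an immediate specialization of Theorem~\ref{thm:Pc_stocorder} (indeed, of the preceding ``effect of number of users'' corollary), obtained by plugging in $\Psi_k=1$, $\Delta_k=M_k$ for SU-BF and $\Psi'_k>1$, $\Delta'_k=M_k-\Psi'_k+1$ for SDMA and checking the two componentwise inequalities. Your extra remark about closed access is also consistent with the paper's later Corollary on closed-access ordering.
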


Another consequence of this general ordering result is the comparison of SISO with SU-BF and SDMA, with the SDMA case specialized to full SDMA. Recall that in case of full SDMA, $\Delta_k = 1$ and $\Psi_k = M_k$ for all the tiers. The result is given in the following corollary.

\begin{cor} [SU-BF vs. SISO vs. full SDMA]
For three $K$-tier HetNet setups, one performing SU-BF in each tier, another doing SISO transmission in each tier and the last one doing full SDMA in each tier, the coverage probability in case of SU-BF is higher than that of SISO, which in turn outperforms full SDMA. The number of antennas in the corresponding tiers of SU-BF and full SDMA HetNets need not be the same.
\end{cor}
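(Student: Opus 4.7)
The plan is to derive this three-way ordering as two consecutive applications of Theorem~\ref{thm:Pc_stocorder}, using SISO as an intermediate pivot. First, I would read off the effective Gamma parameters of the three setups from the channel model and Table~\ref{table:notationtable}: SU-BF corresponds to $(\Delta_k,\Psi_k) = (M_k^{\rm SU},1)$, SISO corresponds to $(\Delta_k,\Psi_k) = (1,1)$, and full SDMA corresponds to $(\Delta_k,\Psi_k) = (1, M_k^{\rm SDMA})$, since under full SDMA we have $\Psi_k = M_k$ and hence $\Delta_k = M_k - \Psi_k + 1 = 1$. Here $M_k^{\rm SU}$ and $M_k^{\rm SDMA}$ are the per-tier antenna counts of the SU-BF and full-SDMA systems, which are not required to be linked.

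Next, I would establish the two pairwise dominations. Comparing SU-BF (as the ``primary'' system) to SISO tier by tier, one has $\Delta_k = M_k^{\rm SU} \geq 1 = \Delta_k'$ and $\Psi_k = 1 \leq 1 = \Psi_k'$, so the hypothesis of Theorem~\ref{thm:Pc_stocorder} is met and SU-BF dominates SISO in coverage. Comparing SISO to full SDMA tier by tier, one has $\Delta_k = 1 \geq 1 = \Delta_k'$ and $\Psi_k = 1 \leq M_k^{\rm SDMA} = \Psi_k'$ (using $M_k^{\rm SDMA}\geq 1$, which is automatic), so Theorem~\ref{thm:Pc_stocorder} again applies and SISO dominates full SDMA. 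Chaining the two inequalities by transitivity of first-order stochastic dominance yields the desired SU-BF $\geq$ SISO $\geq$ full SDMA ordering of coverage probabilities.

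The final observation to record is why the antenna counts of the SU-BF and SDMA configurations need not agree: the hypothesis of Theorem~\ref{thm:Pc_stocorder} only constrains the tier-wise $(\Delta_k,\Psi_k)$ of the two systems being compared, and in both of our comparisons the ``other'' system is SISO, whose parameters are fixed to $(1,1)$ independent of any antenna count. Consequently $M_k^{\rm SU}$ and $M_k^{\rm SDMA}$ enter the two comparisons separately and may be chosen arbitrarily (and differently). There is no genuine obstacle in this proof; it is essentially bookkeeping on top of Theorem~\ref{thm:Pc_stocorder}. The only thing to be careful about is the correct identification of $\Delta_k$ under full SDMA, which as noted reduces to $\Delta_k = 1$ and is what enables the SISO-to-SDMA leg of the argument to go through for any $M_k^{\rm SDMA}$.
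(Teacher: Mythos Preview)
Your proposal is correct and matches the paper's own argument essentially line for line: the paper also just reads off the shape parameters $(\Delta_k,\Psi_k)$ for SU-BF, SISO, and full SDMA as $(M_k,1)$, $(1,1)$, and $(1,M_k)$ respectively, and then invokes Theorem~\ref{thm:Pc_stocorder}. Your use of SISO as an explicit pivot to decouple the antenna counts $M_k^{\rm SU}$ and $M_k^{\rm SDMA}$ is exactly what the paper's brief remark ``need not be the same'' is getting at.
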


The proof follows from the fact that the shape parameters in case of SU-BF are $\Delta_k = M_k$ and $\Psi_k = 1$, where $M_k>1$ is the number of antennas; in case of SISO are $\Delta'_k = \Psi'_k = 1$; and in case of full SDMA are $\Delta''_k = 1$ and $\Psi''_k = M_k$, where $M_k>1$ is the number of transmit antennas.

For closed access networks, it can be shown that the coverage ordering result derived in Theorem~\ref{thm:Pc_stocorder} holds under a slightly weaker condition because a typical mobile is not allowed to connect to all the tiers. The result is given as the following Corollary of Theorem~\ref{thm:Pc_stocorder} and the proof is skipped. Due to the similarity of this result with the open access case, the insights gained for the open access networks above carry over to the closed access networks as well.

\begin{cor}[Coverage ordering in closed access]
For two HetNets, with $\ncalB \subset \ncalK$ open access tiers, the one with  system parameters $\{\Delta_k\}$ and $\{\Psi_k\}$ has a higher or equal coverage than the one with system parameters $\{\Delta'_k\}$ and $\{\Psi'_k\}$ if $\Delta_k \geq \Delta'_k$ for $k \in \ncalB$ and $\Psi_k \leq \Psi'_k$ for $k \in \ncalK$.
\end{cor}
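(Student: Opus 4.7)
The plan is to re-run the proof of Theorem~\ref{thm:Pc_stocorder} with the outer union over $\ncalK$ replaced by the union over the open-access subset $\ncalB$, and then to read off the corresponding weakening of the hypothesis by tracking which channel-power parameters appear in the resulting indicator. Starting from Definition~\ref{def:pc} and substituting the alternate form~\eqref{eq:SIR_alternate} of the $\sir$, the closed-access coverage probability becomes
\begin{align}
\pc = \E \mathbf{1}\left(\bigcup_{k\in \ncalB} \max_{x_k \in \Phi_k} \frac{P_k \|x_k\|^{-\alpha}}{\sum\limits_{j\in \ncalK} \sum\limits_{y\in \Phi_j \setminus x_k} P_j Z_{jk} \|y\|^{-\alpha}} > \T_k \right),
\end{align}
where $Z_{jk} = g_{jy}/h_{kx_k}$ has the distribution of the ratio of a $\Gamma(\Psi_j,1)$ over a $\Gamma(\Delta_k,1)$ as in Lemma~\ref{lem:gamma_stocorder}.

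Next, I would observe that in this indicator only those pairs of indices $(j,k)$ with $j \in \ncalK$ and $k \in \ncalB$ ever appear: every tier radiates interference, but only open-access tiers contribute a candidate serving BS. Applying Lemma~\ref{lem:gamma_stocorder} one pair at a time therefore gives $Z'_{jk} \geq_{\rm st} Z_{jk}$ as soon as $\Psi'_j \geq \Psi_j$ and $\Delta'_k \leq \Delta_k$ for exactly those indices; equivalently, $\Delta_k \geq \Delta'_k$ for $k \in \ncalB$ and $\Psi_k \leq \Psi'_k$ for $k \in \ncalK$, which is precisely the hypothesis of the corollary. Since the indicator is element-wise non-increasing in each $Z_{jk}$, a routine application of Lemma~\ref{lem:func_stocorder} (to its complement, exactly as in the final step of Theorem~\ref{thm:Pc_stocorder}) yields the desired coverage inequality between the two systems.

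The main obstacle, though not a deep one, is careful index book-keeping: one must resist carrying over the stronger open-access requirement $\Delta_k \geq \Delta'_k$ for every $k \in \ncalK$. Tiers in $\ncalK \setminus \ncalB$ act only as interferers, so their direct-link shape parameters $\Delta_k$ never enter the expression for $\pc$; only their interfering-link shape parameters $\Psi_k$ do. Once this asymmetry between the roles of $\Delta_k$ and $\Psi_k$ is made explicit by the reformulation above, the weakened hypothesis of the corollary falls out immediately and the remainder of the argument is mechanical, so no separate proof is really needed beyond the pointer to Theorem~\ref{thm:Pc_stocorder}.
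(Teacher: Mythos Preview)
Your proposal is correct and mirrors exactly what the paper intends: it explicitly skips the proof, noting only that the result is a corollary of Theorem~\ref{thm:Pc_stocorder}. Your rerunning of that argument with the outer union restricted to $\ncalB$, together with the observation that the direct-link parameters $\Delta_k$ for $k \in \ncalK \setminus \ncalB$ never enter the indicator, is precisely the adaptation the paper has in mind.
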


\subsection{Ordering Result for Rate per User}
Another metric of interest for the performance evaluation of HetNets is the rate achievable per user. In addition to the link quality (characterized in terms of $\sir$), it also depends upon the effective resources allocated to each user. For tractability, we make following two assumptions on resource allocation: i) each $k^{th}$ tier BS serves same number of users, and ii) each BS allocates equal time-frequency resources to all its users. For SDMA, it should be noted that several users will be scheduled on the same time-frequency resource block. Interested readers can refer to~\cite{SinDhiJ2013} for more details on the motivation and validation of these assumptions. Under these assumptions, we denote the effective time-frequency resources, e.g., bandwidth, allocated to a user served by a $k^{th}$ tier BS by $\ncalO_k$. The two assumptions on resource allocation ensure that $\ncalO_k$ is the same for all the users served by any $k^{th}$ tier BS. Therefore, the downlink rate of a typical user served by a $k^{th}$ tier BS located at $x_k \in \Phi_k$ is
\begin{align}
R(x_k) = \ncalO_k \log_2(1+\sir(x_k)).
\end{align}
Due to the difficulties in modeling exact load on each BS~\cite{DhiGanJ2012}, which often requires characterization of the service areas for different types of BSs, it is challenging to characterize $\ncalO_k$ and hence derive exact expressions for per user rate distribution~\cite{SinDhiJ2013}. However, we now show that to compare different multi-antenna transmission techniques in certain cases, this characterization is not required and the general ordering result derived above for the coverage probability can be easily extended for the rate per user as well. Before going into the technical details, it is important to note that the loading across tiers may differ significantly due to the orders of magnitude differences in their coverage footprints. Therefore, the effective resources $\ncalO_k$ available in small cells for each user might be significantly higher than the macrocells. In such a case, it might be beneficial for a user to connect to a small cell even though it may not provide the best $\sir$ over the network. We will capture this characteristic of HetNets in our definition of rate distribution below. Due to the interpretation of a minimum rate required by each application, e.g., video, we will study rate distribution in terms of ``rate coverage'', which is defined below. It is just the CCDF of rate when the target rates are the same for all the tiers.

\begin{ndef}[Rate Coverage]
A typical user in an open access network is said to be in rate coverage if its effective downlink rate from at least one of the BSs in the network is higher than the corresponding target. We denote the target rate for a $k^{th}$ tier BS as $T_k$. Rate coverage can now be mathematically expressed as
\begin{align}
\rc = \nbbP\left(\bigcup_{k \in \ncalK} \max_{x \in \Phi_k} \ncalO_k \log_2(1+\sir(x_k)) > T_k  \right).
\end{align}
For closed access, the expression remains the same except that the union is now over the set $\ncalB \subset \ncalK$.
\end{ndef}

It is easy to show that the rate coverage for open access networks is always higher than the closed access networks. This follows from the same arguments that were used in case of coverage probability earlier in this section. We now state the main ordering result for rate coverage in the following Theorem.

\begin{theorem}[Ordering result for rate coverage] \label{thm:PcR_stocorder}
For two $K$-tier HetNets with the same resource allocation per user $\ncalO_k$ for each corresponding tier, the one with system parameters $\{\Delta_k\}$ and $\{\Psi_k\}$ has equal or higher rate coverage than the one with system parameters $\{\Delta'_k\}$ and $\{\Psi'_k\}$ if $\Psi_k \leq \Psi'_k$ for all $k \in \ncalK$, and $\Delta_k \geq \Delta'_k$ for all $k \in \ncalK$ in open access and all $k \in \ncalB$ in closed access.
\end{theorem}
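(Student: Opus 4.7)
The plan is to reduce the rate coverage ordering question to the SIR coverage ordering question already established in Theorem~\ref{thm:Pc_stocorder}. First I would invoke the strict monotonicity of $x \mapsto \log_2(1+x)$ to rewrite the event $\ncalO_k \log_2(1+\sir(x_k)) > T_k$ as $\sir(x_k) > \tilde{\T}_k$, where $\tilde{\T}_k \nbydef 2^{T_k/\ncalO_k} - 1$. Because the hypothesis of the theorem fixes $\ncalO_k$ (and $T_k$) to be identical across the two configurations, the effective SIR threshold $\tilde{\T}_k$ is the same for both setups. Thus
\begin{align}
\rc = \nbbE\, \mathbf{1}\left(\bigcup_{k\in\ncalK} \max_{x_k\in\Phi_k} \sir(x_k) > \tilde{\T}_k\right),
\end{align}
which is formally identical to the coverage probability studied in Theorem~\ref{thm:Pc_stocorder}, just with the targets $\T_k$ replaced by $\tilde{\T}_k$.

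Next, I would reuse verbatim the coupling argument from the proof of Theorem~\ref{thm:Pc_stocorder}. Using the alternative representation~\eqref{eq:SIR_alternate}, the randomness due to the channel appears only through the Gamma ratios $Z_{jk} = g_{jy}/h_{kx_k}$, which enter the denominator of every $\sir(x_k)$. Hence the resulting indicator
\begin{align}
g(\{Z_{jk}\}) = \mathbf{1}\left(\bigcup_{k\in\ncalK} \max_{x_k\in\Phi_k} \frac{P_k \|x_k\|^{-\alpha}}{\sum_{j\in\ncalK}\sum_{y\in\Phi_j\setminus x_k} P_j Z_{jk} \|y\|^{-\alpha}} > \tilde{\T}_k\right)
\end{align}
is element-wise non-increasing in each $Z_{jk}$.

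Then I would apply Lemma~\ref{lem:gamma_stocorder}: the hypotheses $\Psi_k \leq \Psi'_k$ and $\Delta_k \geq \Delta'_k$ give $Z'_{jk} \geq_{\rm st} Z_{jk}$, so by Lemma~\ref{lem:func_stocorder} applied to the non-increasing function $g$ (equivalently, to $-g$ which is non-decreasing), one obtains $\nbbE[g(\{Z_{jk}\})] \geq \nbbE[g(\{Z'_{jk}\})]$, which is precisely the claimed rate coverage ordering. For the closed access case, the outer union is over $\ncalB$ only, so the pair $(j,k)$ appearing in the SIR requires $k \in \ncalB$ while $j$ still ranges over all of $\ncalK$ (interference comes from every tier); this explains why $\Delta_k \geq \Delta'_k$ is required only for $k \in \ncalB$ while $\Psi_k \leq \Psi'_k$ is required for all $k \in \ncalK$, matching exactly the conditions of the theorem.

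Honestly, there is no real obstacle once one notices that $\log_2(1+\cdot)$ can be inverted and that the common $\ncalO_k$ cancels from both sides of the threshold comparison; the whole theorem is a corollary of Theorem~\ref{thm:Pc_stocorder} after this reduction. The only point worth emphasising is that the conclusion hinges on $\ncalO_k$ being unaffected by the change in $(\Delta_k, \Psi_k)$; this is a genuine restriction, since in practice the effective per-user resource can depend on how many users each BS schedules simultaneously, but it is precisely what the theorem's hypothesis assumes away.
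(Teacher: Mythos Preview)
Your proposal is correct and follows essentially the same approach as the paper: both reduce the rate-coverage indicator to a monotone function of the Gamma ratios $Z_{jk}$ and then invoke Lemmas~\ref{lem:gamma_stocorder} and~\ref{lem:func_stocorder} exactly as in Theorem~\ref{thm:Pc_stocorder}. The only cosmetic difference is that you explicitly invert $\log_2(1+\cdot)$ to produce an equivalent $\sir$ threshold $\tilde{\T}_k = 2^{T_k/\ncalO_k}-1$ and then literally quote Theorem~\ref{thm:Pc_stocorder}, whereas the paper leaves the rate expression intact and simply observes that the indicator in~\eqref{eq:pcR_Exp} is element-wise non-increasing in $Z_{jk}$; both routes are equivalent and equally valid.
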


\begin{proof}
The rate coverage can be expressed as
\begin{align}
\rc = \nbbE \nb1 \left(\bigcup_{k \in \ncalK} \max_{x \in \Phi_k} \ncalO_k \log_2(1+\sir(x_k)) > T_k  \right),
\label{eq:pcR_Exp}
\end{align}
where $\sir(x_k)$ can be expressed as
\begin{align}
\sir(x_k) &= \frac{P_k h_{kx_k} \|x_k\|^{-\alpha}}{\sum\limits_{k\in \ncalK} \sum\limits_{y\in \Phi_j \setminus x_k} P_j g_{jy} \|y\|^{-\alpha}},\\
&= \frac{P_k \|x_k\|^{-\alpha}}{\sum\limits_{k\in \ncalK} \sum\limits_{y\in \Phi_j \setminus x_k} P_j Z_{jk} \|y\|^{-\alpha}},
\end{align}
where as in the proof of Theorem~\ref{thm:Pc_stocorder}, we define $Z_{jk} = g_{jy}/h_{k x_k}$ as the ratio of the two Gamma random variables. Now note that the indicator function inside the expectation of \eqref{eq:pcR_Exp} is an element wise decreasing function of $Z_{jk}$, from which the result follows on the same lines as the proof of Theorem~\ref{thm:Pc_stocorder}.
\end{proof}

\begin{remark}[Same ordering for coverage and rate per user]
From Theorems~\ref{thm:Pc_stocorder} and~\ref{thm:PcR_stocorder}, we note that the ordering conditions for rate per user in the above setup are the same as that of coverage probability. Therefore, all the conclusions, including the ordering of SDMA, SU-BF and SISO transmission techniques, derived for coverage probability carry over to the rate per user case as well.
\end{remark}

Although it is clear from the above discussion that both SU-BF and SISO outperform SDMA, both in terms of coverage probability and average rate per user, it is important to note that we have not yet accounted for the fact that SDMA serves more users than both SISO and SU-BF, and may result in a higher sum-data rate. To address this, we compare the three transmission techniques in terms of ASE in the next section.

\section{Coverage Probability and ASE Performance}
This is the second main technical section of the paper where we derive an upper bound on the coverage probability of a typical user in a $K$-tier HetNet, where the transmission techniques adopted by each tier are characterized in terms of the shape parameters $\Delta_k$ and $\Psi_k$ of the Gamma rvs. Recall that the coverage probability is formally defined in Definition~\ref{def:pc}. For this analysis, we assume that BS locations of each tier are drawn from an independent PPP $\Phi_k$ with density $\lambda_k$. Although this model is not as general as the one considered in the previous section, it is likely accurate in modeling the opportunistic deployment of small cells and has been validated for planned tiers, such as single-antenna macrocells by empirical observations~\cite{TayDhiC2012} and theoretical arguments under sufficient channel randomness~\cite{BlaKarJ2012}. In this paper, we validate it in the context of coverage probability in MIMO HetNets by comparing it with an actual $4$G deployment and the popular grid model in the numerical results section.

\subsection{Upper Bound on Coverage Probability}
Before deriving the upper bound, we first derive an expression for the Laplace transform of interference. The result is given in Lemma~\ref{lem:laplace} and the proof is given in the Appendix~\ref{appendix:Laplace}. This generalizes the Laplace transform of interference derived for $K$-tier SISO HetNets with Rayleigh fading, i.e., exponential channel powers, in Theorem 1 of~\cite{DhiGanJ2012}.

\begin{lemma} \label{lem:laplace}
The Laplace transform of interference $\mathcal{L}_I(s) = \E\left[e^{-sI}\right]$, where $I = \sum_{k\in \ncalK} \sum_{y\in \Phi_j} P_j g_{jy} \|y\|^{-\alpha}$ is
\begin{align}
\mathcal{L}_I(s) = \exp\left( - s^{\frac{2}{\alpha}} \sum_{j \in \ncalK} \lambda_j P_j^{\frac{2}{\alpha}}   C (\alpha, \Psi_j) \right),
\end{align}
where
\begin{align}
C (\alpha, \Psi_j) = \frac{2 \pi }{\alpha} \sum_{m=1}^{\Psi_j} {\Psi_j \choose m}  B\left(\Psi_j -m + \frac{2}{\alpha}, m - \frac{2}{\alpha}\right),
\label{eq:CalphaM}
\end{align}
and $B(x,y) = \int_{0}^1 t^{x-1} (1-t)^{y-1} \nrmd  t$ is Euler's Beta function.
\end{lemma}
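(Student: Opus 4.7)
The plan is to compute $\mathcal{L}_I(s)=\mathbb{E}[e^{-sI}]$ by exploiting independence of the tiers, applying the probability generating functional (PGFL) of each PPP, and then reducing the resulting radial integral to a sum of Beta functions via a binomial expansion. The integrand in the tier-$j$ contribution involves $\mathbb{E}_{g_{jy}}\!\left[e^{-sP_j g_{jy}\|y\|^{-\alpha}}\right]$; since $g_{jy}\sim\Gamma(\Psi_j,1)$ has moment generating function $t\mapsto(1+t)^{-\Psi_j}$, this conditional expectation equals $(1+sP_j\|y\|^{-\alpha})^{-\Psi_j}$. The fact that $\Psi_j$ is a positive integer is what ultimately produces the finite sum appearing in $C(\alpha,\Psi_j)$.

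First, I would write $I=\sum_{j\in\ncalK}I_j$, where the tier-$j$ interferences $I_j$ are independent, so $\mathcal{L}_I(s)=\prod_{j\in\ncalK}\mathcal{L}_{I_j}(s)$. For each tier, I would apply the PPP PGFL and switch to polar coordinates to obtain
\begin{equation}
\mathcal{L}_{I_j}(s)=\exp\!\left(-2\pi\lambda_j\int_0^\infty\!\left[1-(1+sP_jr^{-\alpha})^{-\Psi_j}\right]r\,\mathrm{d}r\right).
\end{equation}
Next, I would use the binomial identity $(1+x)^{\Psi_j}-1=\sum_{m=1}^{\Psi_j}\binom{\Psi_j}{m}x^m$, valid because $\Psi_j\in\mathbb{N}$, to rewrite
\begin{equation}
1-(1+x)^{-\Psi_j}=\sum_{m=1}^{\Psi_j}\binom{\Psi_j}{m}\frac{x^m}{(1+x)^{\Psi_j}}.
\end{equation}
Substituting $x=sP_jr^{-\alpha}$ and then changing variables to $x$ (so that $r\,\mathrm{d}r=-\tfrac{1}{\alpha}(sP_j)^{2/\alpha}x^{-2/\alpha-1}\mathrm{d}x$ and the limits are reversed), each summand becomes
\begin{equation}
\frac{(sP_j)^{2/\alpha}}{\alpha}\int_0^\infty\!\frac{x^{\,m-2/\alpha-1}}{(1+x)^{\Psi_j}}\,\mathrm{d}x
=\frac{(sP_j)^{2/\alpha}}{\alpha}\,B\!\left(m-\tfrac{2}{\alpha},\Psi_j-m+\tfrac{2}{\alpha}\right),
\end{equation}
by the standard integral representation $B(a,b)=\int_0^\infty x^{a-1}(1+x)^{-a-b}\,\mathrm{d}x$. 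Reassembling the summation yields precisely $(sP_j)^{2/\alpha}C(\alpha,\Psi_j)$, and multiplying the per-tier Laplace transforms produces the claimed expression.

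The only subtle step is the change of variables and the identification of the resulting integral as a Beta function; there is a mild convergence condition $m-2/\alpha>0$ which is satisfied for all $m\ge 1$ whenever $\alpha>2$, the physically relevant regime. I would note, as a sanity check, that specializing to $\Psi_j=1$ collapses the sum to a single term $B(1-2/\alpha,2/\alpha)=\pi/\sin(2\pi/\alpha)\cdot(\alpha/2\pi)\cdot(2\pi/\alpha)$, recovering the familiar Rayleigh-fading SISO Laplace transform of~\cite{DhiGanJ2012} and thereby confirming the generalization. No other step is conceptually hard; the work is purely in carrying out the binomial expansion and Beta-function change of variables carefully.
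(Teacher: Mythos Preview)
Your proposal is correct and follows essentially the same route as the paper's proof: tier independence, PGFL of the PPP, the Gamma Laplace transform $(1+sP_j\|y\|^{-\alpha})^{-\Psi_j}$, binomial expansion of the numerator, and reduction of the radial integral to a Beta function. The only cosmetic difference is that the paper uses the substitution $t=(1+r^{-\alpha})^{-1}$ to reach the $\int_0^1 t^{x-1}(1-t)^{y-1}\,\mathrm{d}t$ form of $B(x,y)$, whereas you use the equivalent $\int_0^\infty x^{a-1}(1+x)^{-a-b}\,\mathrm{d}x$ representation; both land on the same expression for $C(\alpha,\Psi_j)$.
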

Using this result, we now derive an upper bound on the coverage probability and the result is given in Theorem~\ref{thm:MIMOmain}. 

\begin{theorem} \label{thm:MIMOmain}
The coverage probability of a typical user in a $K$-tier open access HetNet is upper bounded by
\begin{align}
\label{eq:MIMOmain}
\pc & \leq \sum_{k \in \ncalK} \lambda_k \ncalA_k,
\end{align}
where $s_{x_k} = \T_k \|x_k\|^\alpha P_k^{-1}$ and
\begin{align}
\ncalA_k = \sum_{i=0}^{\Delta_k-1} \frac{1}{i!} \int\limits_{x_k \in \R^2} (-s_{x_k})^i \frac{\delta^i}{\delta (s_{x_k})^i } \mathcal{L}_{I_{x_k}}(s_{x_k}) \nrmd  x_k.
\end{align}
The upper bound for closed access networks is the same except that the summation in~\eqref{eq:MIMOmain} is over the set $\ncalB$ instead of $\ncalK$.
\end{theorem}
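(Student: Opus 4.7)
\textbf{Proof plan for Theorem~\ref{thm:MIMOmain}.} The plan is to combine three standard ingredients in stochastic geometry: a union bound (followed by Markov's inequality on a counting functional) to replace the ``$\max$ over a union'' event by a sum; Campbell--Mecke together with Slivnyak's theorem to convert the expected number of $\sir$-covering BSs into an integral of a per-location coverage probability; and the Gamma CCDF identity that turns $\P(h>sI)$ for $h\sim\Gamma(\Delta_k,1)$ into derivatives of the interference Laplace transform already computed in Lemma~\ref{lem:laplace}.

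First I would apply the union bound
\begin{align*}
\pc \;=\; \P\Big(\bigcup_{k\in\ncalK}\{\exists x_k\in\Phi_k:\sir(x_k)>\T_k\}\Big)
\;\leq\; \sum_{k\in\ncalK}\P(N_k\geq 1),
\end{align*}
where $N_k=\sum_{x_k\in\Phi_k}\mathbf 1(\sir(x_k)>\T_k)$, and then use Markov to get $\P(N_k\geq 1)\leq \E[N_k]$. This is precisely the step responsible for the inequality in~\eqref{eq:MIMOmain}; the rest of the argument will be an exact computation of $\E[N_k]$. Next, I would apply the Campbell--Mecke formula for the PPP $\Phi_k$ and invoke Slivnyak's theorem, so that conditioning on a point of $\Phi_k$ at $x_k$ leaves the law of the rest of the network unchanged; this yields $\E[N_k]=\lambda_k\int_{\R^2}\P(\sir(x_k)>\T_k)\,\nrmd x_k$, where the probability is evaluated with the aggregate interference $I_{x_k}$ whose Laplace transform is exactly $\mathcal L_{I_{x_k}}$ from Lemma~\ref{lem:laplace}.

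The third step is the conditional coverage computation. Because $h_{kx_k}\sim\Gamma(\Delta_k,1)$ with integer shape, its CCDF satisfies
\begin{align*}
\P(h_{kx_k}>s_{x_k}I_{x_k}\mid I_{x_k})
\;=\; e^{-s_{x_k}I_{x_k}}\sum_{i=0}^{\Delta_k-1}\frac{(s_{x_k}I_{x_k})^{i}}{i!},
\end{align*}
with $s_{x_k}=\T_k\|x_k\|^\alpha P_k^{-1}$. Taking the expectation over $I_{x_k}$ and exchanging expectation and finite sum, each term reduces, via the standard identity $\E[I^i e^{-sI}]=(-1)^i\frac{\nrmd^i}{\nrmd s^i}\mathcal L_I(s)$, to the $i$-th derivative of the Laplace transform:
\begin{align*}
\E\big[(s_{x_k}I_{x_k})^{i}e^{-s_{x_k}I_{x_k}}\big]
\;=\; (-s_{x_k})^{i}\frac{\nrmd^{i}}{\nrmd s_{x_k}^{i}}\mathcal L_{I_{x_k}}(s_{x_k}).
\end{align*}
Summing over $i$ from $0$ to $\Delta_k-1$ and integrating over $x_k\in\R^2$ yields $\ncalA_k$, and summing over tiers yields~\eqref{eq:MIMOmain}. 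For the closed-access case, the only change is that the initial union and hence the union bound run over $\ncalB\subset\ncalK$; every subsequent step is unchanged since the interference distribution is the same.

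The main technical obstacle is the use of Campbell--Mecke/Slivnyak: one has to be careful that after conditioning on the tagged BS $x_k\in\Phi_k$, the residual interference $I_{x_k}$ (which excludes $x_k$ from $\Phi_k$ but retains all points of the other tiers) has the unconditional distribution whose Laplace transform Lemma~\ref{lem:laplace} computes. Apart from this, swapping the $\E$ and the $i$-th derivative requires a routine dominated-convergence justification using the exponential decay of $\mathcal L_{I_{x_k}}(s)$ and its derivatives. Everything else is mechanical algebra.
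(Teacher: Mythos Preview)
Your proposal is correct and follows essentially the same route as the paper: union bound on the coverage event to pass from the indicator of a union to a sum of indicators (the paper writes this as $\mathbf 1(\cup A_i)\le\sum\mathbf 1(A_i)$ directly rather than splitting into a per-tier union bound plus Markov, but the resulting inequality is identical), then Campbell--Mecke/Slivnyak to turn the expected count into $\lambda_k\int_{\R^2}(\cdot)\,\nrmd x_k$, then the integer-shape Gamma CCDF expansion and the identity $\E[I^i e^{-sI}]=(-1)^i\tfrac{\nrmd^i}{\nrmd s^i}\mathcal L_I(s)$. Your remark that the closed-access case differs only in restricting the outer sum to $\ncalB$ also matches the paper exactly.
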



\begin{proof}
We prove the result for open access networks and will highlight exactly where the proof will differ for closed access networks. Starting with the definition of the coverage probability, we have 
\begin{align}
\pc &= \E \left[1 \left(\bigcup_{k \in \ncalK} \bigcup_{x_k \in \Phi_k} 
\sir(x_k) > \T_k \right)\right]\\
&\stackrel{(a)}{\leq} \E \left[\sum_{k \in \ncalK} \sum_{x_k \in \Phi_k} 1 \left( \sir(x_k) > \T_k \right)\right] \label{eq:unionbound}\\
&= \sum_{k \in \ncalK}\E \left[ \sum_{x_k \in \Phi_k} 1 \left( P_k h_{kx_k} \|x_k\|^{-\alpha}> \T_k I_{x_k} \right)\right], \label{eq:intermed1}
\end{align}
where $(a)$ follows from the union bound and $I_{x_k}$ is the interference received by the typical user when it is connected to the $k^{th}$ tier BS located at $x_k$, i.e.,
\begin{align}
I_{x_k} = \sum_{j \in \ncalK} \sum_{y \in \Phi_j \setminus x_k} P_j g_{jy} \|y\|^{-\alpha}.
\end{align}
Note that for closed access, the summation in \eqref{eq:unionbound} and \eqref{eq:intermed1} will be over $\ncalB$ instead of $\ncalK$. This is the only difference in the proofs of open and closed access cases. Continuing with the proof of open access case, since the channel power of the direct link is independent of everything else, we can take the expectation w.r.t. $h_{kx_k}$ inside~\eqref{eq:intermed1} to write the coverage probability as
\begin{align}
\pc &\leq \sum_{k \in \ncalK} \E \left[\sum_{x_k \in \Phi_k} \P \left( h_{kx_k} > \T_k I_{x_k} \|x_k\|^{\alpha} P_k^{-1} \right)\right].
\label{eq:intermed2}
\end{align}
Now we first evaluate the probability $\P(h_{kx} > z)$ as
\begin{align}
\P(h_{kx} > z) &\stackrel{(a)}{=} \frac{\Gamma(\Delta_k, z)}{\Gamma(\Delta_k)}
\stackrel{(b)}{=} e^{-z} \sum_{i=0}^{\Delta_k-1} \frac{z^i}{i!},
\label{eq:intermed3}
\end{align}
where $(a)$ follows from $h_{kx} \sim \Gamma(\Delta_k,1)$, and $\Gamma(\Delta_k, z)$ in the numerator is the upper incomplete Gamma function given by $\Gamma(\Delta_k, z) = \int_{z}^{\infty} u^{\Delta_k - 1} e^{-u} \nrmd  u$, $(b)$ follows by specializing the expression of incomplete Gamma function for the case when $\Delta_k$ is an integer. Now denote $\beta_k \|x_k\|^{\alpha} P_k^{-1}$ by $s_{x_k}$ and substitute \eqref{eq:intermed3} in \eqref{eq:intermed2} to get
\begin{align}
\pc &\leq \sum_{k \in \ncalK} \E \sum_{x_k \in \Phi_k} e^{-s_{x_k} I_{x_k}} \sum_{i=0}^{\Delta_k-1} \frac{(s_{x_k} I_{x_k})^i}{i!}\\
&\stackrel{(a)}{=} \sum_{k \in \ncalK}  \lambda_k \int\limits_{x_k \in \R^2} \E_{I_{x_k}} e^{-s_{x_k} I_{x_k}} \sum_{i=0}^{\Delta_k-1} \frac{(s_{x_k} I_{x_k})^i}{i!} \nrmd  x_k\\
&= \sum_{k \in \ncalK}  \lambda_k \sum_{i=0}^{\Delta_k-1} \frac{1}{i!} \int\limits_{x_k \in \R^2} \E_{I_{x_k}}e^{-s_{x_k} I_{x_k}} (s_{x_k} I_{x_k})^i \nrmd  x_k,
\end{align}
where $(a)$ follows from Campbell-Mecke Theorem~\cite{StoKenB1995}. Now note that if $\Delta_k$ were $1$, the expectation term is just $\mathcal{L}_{I_{x_k}}(s_{x_k})$, i.e., the Laplace transform of interference evaluated at $s_{x_k}$. For $\Delta_k > 1$, we evaluate the expectation in terms of the derivative of the Laplace transform as
\begin{align}
\E_{I_{x_k}}\left[ e^{-s_{x_k} I_{x_k}} (s_{x_k} I_{x_k})^i \right] &\stackrel{(a)}{=} s_{x_k}^i \mathcal{L} \{t^i f_{I_{x_k}}(t) \}(s_{x_k})\\
&\stackrel{(b)}{=} (-s_{x_k})^i \frac{\delta^i}{\delta (s_{x_k})^i } \mathcal{L}_{I_{x_k}}(s_{x_k}),
\label{eq:intermed4}
\end{align}
where $(a)$ follows from the definition of the Laplace transform and $(b)$ follows from the identity $t^n f(t) \longleftrightarrow (-1)^n \frac{\delta^n}{\delta (s)^n} \mathcal{L}\{f(t)\} (s)$. Substituting this in \eqref{eq:intermed4}, we can express the upper bound on coverage probability in terms of Laplace transform of interference as
\begin{align}
\pc & \leq \sum_{k \in \ncalK} \lambda_k \sum_{i=0}^{\Delta_k-1} \frac{1}{i!} \int\limits_{x_k \in \R^2} (-s_{x_k})^i \frac{\delta^i}{\delta (s_{x_k})^i } \mathcal{L}_{I_{x_k}}(s_{x_k}) \nrmd  x_k,
\label{eq:intermed5}
\end{align}
which completes the proof.
\end{proof}
We note that the above upper bound involves a derivative of Laplace transform, which makes its numerical evaluation difficult. However, it is possible to reduce the upper bound to a simple closed form for full SDMA and easy to evaluate numerical expressions in the other cases. The simplified result is given in the following Corollary.

\begin{cor}
For $\Delta_k = 1$, $\ncalA_k$ can be reduced to
\begin{align}
\ncalA_k = \frac{\pi P_k^{\frac{2}{\alpha}} \T_k^{\frac{2}{\alpha}}}{\sum_{j \in \ncalK} \lambda_j P_j^{\frac{2}{\alpha}} C(\alpha, M_j)},
\end{align}
and for $\Delta_k > 1$ to
\begin{align}
\ncalA_k =&\ \sum_{i=0}^{\Delta_k-1} \frac{1}{i!} \sum \frac{i!}{j_{1}! j_{2}! \ldots j_{i}!} \int\limits_{x_k \in \R^2} (-s_{x_k})^i e^{-\mathcal{C} s_{x_k}^{\frac{2}{\alpha}}} \nonumber \\
&\ \prod_{\ell = 1}^{i} \frac{1}{(\ell!)^{jl}}\left(-\mathcal{C} s_{x_k}^{\frac{2}{\alpha}-\ell} \displaystyle \prod_{n=0}^{\ell-1}\left(\frac{2}{\alpha}-n\right)\right)^{j_{\ell}} \nrmd  x_k.
\end{align}
\end{cor}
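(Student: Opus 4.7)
The plan is to plug in the explicit form of $\mathcal{L}_{I_{x_k}}$ from Lemma~\ref{lem:laplace} and then handle the two regimes $\Delta_k = 1$ and $\Delta_k > 1$ separately. Write $\mathcal{C} \nbydef \sum_{j\in\ncalK}\lambda_j P_j^{2/\alpha} C(\alpha,\Psi_j)$ (with $\Psi_j = M_j$ in the full SDMA case), so that $\mathcal{L}_{I_{x_k}}(s) = \exp(-\mathcal{C}\, s^{2/\alpha})$. Also recall that $s_{x_k} = \T_k \|x_k\|^\alpha P_k^{-1}$, and therefore $s_{x_k}^{2/\alpha} = \T_k^{2/\alpha} \|x_k\|^2 P_k^{-2/\alpha}$.

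For the case $\Delta_k=1$, the sum in the definition of $\ncalA_k$ collapses to the single term $i=0$, giving
\begin{align}
\ncalA_k = \int_{\R^2} \exp\!\left(-\mathcal{C}\, s_{x_k}^{2/\alpha}\right) \nrmd x_k = \int_{\R^2} \exp\!\left(-\mathcal{C}\, \T_k^{2/\alpha} P_k^{-2/\alpha} \|x_k\|^2\right) \nrmd x_k.
\end{align}
Switching to polar coordinates and evaluating the resulting Gaussian-type integral $\int_0^\infty r e^{-ar^2}\nrmd r = 1/(2a)$ yields the claimed closed-form expression in terms of $\mathcal{C}$, $P_k$, and $\T_k$. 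This step is essentially bookkeeping; no real obstacle arises.

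For the case $\Delta_k > 1$, the task reduces to computing $\tfrac{\delta^i}{\delta s^i} \exp(-\mathcal{C}\, s^{2/\alpha})$ for $i=0,\ldots,\Delta_k-1$. The plan is to apply Fa\`a di Bruno's formula for the derivatives of a composition $e^{g(s)}$, where $g(s) = -\mathcal{C}\, s^{2/\alpha}$. A direct induction (or a generalized Leibniz argument) gives $g^{(\ell)}(s) = -\mathcal{C}\, s^{2/\alpha-\ell}\prod_{n=0}^{\ell-1}(2/\alpha - n)$, and Fa\`a di Bruno yields
\begin{align}
\frac{\delta^i}{\delta s^i} e^{g(s)} = e^{g(s)} \sum \frac{i!}{j_1!\cdots j_i!} \prod_{\ell=1}^{i} \frac{1}{(\ell!)^{j_\ell}}\bigl(g^{(\ell)}(s)\bigr)^{j_\ell},
\end{align}
where the outer sum runs over all nonnegative tuples $(j_1,\ldots,j_i)$ with $j_1+2j_2+\cdots+i j_i = i$. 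Substituting this expansion into the definition of $\ncalA_k$, pulling the $(-s_{x_k})^i$ factor inside, and keeping the integration over $x_k\in\R^2$ outside reproduces exactly the displayed double-sum expression in the corollary.

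The main obstacle is simply the combinatorial bookkeeping in the Fa\`a di Bruno step: one must track the correct multinomial coefficient $i!/(j_1!\cdots j_i!)$, the $1/(\ell!)^{j_\ell}$ arising from the Taylor-series normalization in the formula, and the falling-factorial $\prod_{n=0}^{\ell-1}(2/\alpha - n)$ that encodes the $\ell$-th derivative of the fractional power $s^{2/\alpha}$. The signs also need to be tracked carefully: each $g^{(\ell)}$ contributes a factor of $-\mathcal{C}$, while the prefactor $(-s_{x_k})^i$ provides the outer sign, and these must line up with the signs displayed in the statement. Once these are verified, the integral over $x_k$ is left symbolic in the stated result; any further simplification (e.g., via polar coordinates and the substitution $u = s_{x_k}^{2/\alpha}$) is unnecessary for establishing the corollary.
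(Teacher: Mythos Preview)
Your proposal is correct and follows essentially the same approach as the paper: both plug in the Laplace transform from Lemma~\ref{lem:laplace}, convert to polar coordinates for the $\Delta_k=1$ case, and invoke Fa\`a di Bruno's formula for the composition $f\circ g$ with $f=\exp$ and $g(s)=-\mathcal{C}\,s^{2/\alpha}$ for $\Delta_k>1$. If anything, you have supplied more of the combinatorial bookkeeping than the paper's proof, which simply states that the result follows from Fa\`a di Bruno.
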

\begin{proof}
For $\Delta_k = 1$,
\begin{align}
\ncalA_k &= \int_{x_k \in \nbbR^2} \ncalL_{I_{x_k}} \left(\T_k \|x_k\|^\alpha P_k^{-1} \right) \nrmd x_k \\
&\stackrel{(a)}{=} \int\limits_{x_k \in \nbbR^2} \exp \left(- \T_k^{\frac{2}{\alpha}} \|x_k\|^{2} P_k^{-\frac{2}{\alpha}} \ncalC \right) \nrmd x_k,
\end{align}
where $(a)$ follows from the Laplace transform expression derived in Lemma~\ref{lem:laplace}. Recall that $\ncalC = \sum_{j \in \ncalK} \lambda_j P_j^{\frac{2}{\alpha}}   C (\alpha, \Psi_j)$. The closed form expression now follows directly by converting the integral from Cartesian to polar coordinates. 

For $\Delta_k > 1$, using the Laplace transform expression and calculating its derivative using Fa\`{a} di Bruno's formula for the composite function $(f \circ g)(s_{x_k})$, with $f(s_{x_k}) = \exp \left(s_{x_k}\right)$, and $g(s_{x_k}) = -\mathcal{C} s_{x_k}^{\frac{2}{\alpha}}$, the result follows.
\end{proof}

We note that the upper bound is in closed form if $\Delta_k = 1$ for all tiers. The result is given in the following corollary. Even for $\Delta_k > 1$, the upper bound can be numerically computed fairly easily, especially for small values of $\Delta_k$.

\begin{cor} \label{thm:SDMAmain}
The coverage probability in a $K$-tier open access HetNet with each $k^{th}$ tier BS performing full SDMA to serve $M_k$ users, i.e., $\Delta_k = 1$ $\forall\ k\in \ncalK$, is given by
\begin{align}
\pc \leq \pi  \frac{\sum_{k \in \ncalK} \lambda_k P_k^{\frac{2}{\alpha}} \beta_k^{-\frac{2}{\alpha}}}{\sum_{j =1}^K \lambda_j P_j^{\frac{2}{\alpha}} C(\alpha, M_j)}.
\end{align}
For the closed access case, the summation in the numerator is over $\ncalB$ instead of $\ncalK$.
\end{cor}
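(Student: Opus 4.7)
The plan is to specialize Theorem~\ref{thm:MIMOmain} to the full-SDMA case $\Delta_k = 1$ for all $k \in \ncalK$. Under this restriction the inner sum over $i$ in the definition of $\ncalA_k$ collapses to its single $i=0$ term, so no derivatives of the Laplace transform are needed, and each $\ncalA_k$ becomes just the Laplace-transform of the interference, integrated over $\nbbR^2$:
\begin{align*}
\ncalA_k = \int_{x_k \in \nbbR^2} \mathcal{L}_{I_{x_k}}(s_{x_k}) \, \nrmd x_k, \qquad s_{x_k} = \T_k \|x_k\|^\alpha P_k^{-1}.
\end{align*}

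The next step is to insert the closed-form Laplace transform from Lemma~\ref{lem:laplace}. With full SDMA, $\Psi_j = M_j$ for every tier, so writing $\mathcal{C} \nbydef \sum_{j \in \ncalK} \lambda_j P_j^{2/\alpha} C(\alpha, M_j)$, Lemma~\ref{lem:laplace} gives $\mathcal{L}_{I_{x_k}}(s_{x_k}) = \exp\bigl(-s_{x_k}^{2/\alpha}\,\mathcal{C}\bigr)$. Substituting $s_{x_k} = \T_k \|x_k\|^\alpha P_k^{-1}$ rewrites the exponent as $-\T_k^{2/\alpha} P_k^{-2/\alpha} \|x_k\|^2 \mathcal{C}$, which depends on $x_k$ only through $\|x_k\|^2$, so the integrand is radially symmetric.

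Converting to polar coordinates then reduces the calculation to the standard Gaussian integral $2\pi \int_0^\infty r\, e^{-a r^2}\,\nrmd r = \pi/a$ with $a = \T_k^{2/\alpha} P_k^{-2/\alpha} \mathcal{C}$, yielding $\ncalA_k = \pi P_k^{2/\alpha} \T_k^{-2/\alpha} / \mathcal{C}$. Summing $\lambda_k \ncalA_k$ over $k \in \ncalK$ and factoring out the common $\pi/\mathcal{C}$ produces exactly the bound in the statement. For the closed-access variant, I would simply invoke the corresponding remark in Theorem~\ref{thm:MIMOmain}: the union bound on which the whole chain rests is taken over $\ncalB$ instead of $\ncalK$, so only the numerator summation changes and the denominator (which records the interference from all transmitting tiers) stays the same.

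There is no real obstacle here, because every ingredient is already in place; the only thing to be careful about is bookkeeping the exponents when substituting $s_{x_k}$ into $\mathcal{L}_{I_{x_k}}$ (the $\alpha$'s and $2/\alpha$'s combine to turn $\|x_k\|^\alpha$ into $\|x_k\|^2$, which is what makes the resulting Gaussian-type integral integrable in closed form), and to remember that $\Psi_j = M_j$ in the full-SDMA regime so that $C(\alpha, \Psi_j)$ in Lemma~\ref{lem:laplace} becomes $C(\alpha, M_j)$ as in the statement.
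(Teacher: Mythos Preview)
Your proposal is correct and follows essentially the same route as the paper: specialize Theorem~\ref{thm:MIMOmain} to $\Delta_k=1$ so that only the $i=0$ term survives, plug in the Laplace transform from Lemma~\ref{lem:laplace} with $\Psi_j=M_j$, and evaluate the resulting radial Gaussian-type integral in polar coordinates before summing $\lambda_k\ncalA_k$ over $k$. Your bookkeeping is also right; in particular you correctly obtain $\ncalA_k=\pi P_k^{2/\alpha}\T_k^{-2/\alpha}/\mathcal{C}$, which is consistent with the $\beta_k^{-2/\alpha}$ in the stated bound.
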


We now comment on the tightness of the coverage probability upper bound in various regimes and for various transmission techniques.

\subsection{Tightness of the Upper Bound}
For conciseness, we will focus on the open access networks, with the understanding that all the arguments remain the same for closed access case.  Since the bound is derived by using the union bound in \eqref{eq:unionbound}, the tightness depends upon the number of candidate BSs that provide $\sir$ greater than the target $\sir$. Denote this random variable by $X(\{\Delta_k\}, \{\Psi_k\})$, which can be expressed as
\begin{align}
X(\{\Delta_k\}, \{\Psi_k\}) = \sum_{k \in \ncalK} \sum_{x_k \in \Phi_k} 1 \left( \sir(x_k) > \T_k \right).
\end{align}
The bound holds with equality if there is strictly one candidate serving BS for a typical user, i.e., $\nbbP(X(\{\Delta_k\}, \{\Psi_k\}) > 1) = 0$. This is the case in SISO HetNets for $\beta_k > 1,\ \forall k,$ as shown in~\cite{DhiGanJ2012}. For any other general system configuration, the tightness of the bound depends upon whether the probability $\nbbP(X(\{\Delta_k\}, \{\Psi_k\}) > 1)$ is close to zero or not. In general, it is hard to evaluate simple expressions for this probability. However, it is possible to make a few simple observations about the expected tightness of the bound. For instance, the bound gets tight with the increasing values of target $\sir$s because $X(\{\Delta_k\}, \{\Psi_k\})$ is an element-wise decreasing function of $\T_k$. For further insights, we derive the following ordering result for $X(\{\Delta_k\}, \{\Psi_k\})$. The proof follows using Lemma~\ref{lem:gamma_stocorder} on the same lines as that of Theorem~\ref{thm:Pc_stocorder}.
 
\begin{theorem}[Ordering result for $X$] \label{thm:X_stocorder}
If $\Delta_k \geq \Delta'_k$ and $\Psi_k \leq \Psi'_k$ $\forall\ k$, then $X(\{\Delta_k\}, \{\Psi_k\})$ (first order) stochastically dominates $X(\{\Delta'_k\}, \{\Psi'_k\})$, i.e. $\nbbP( X(\{\Delta_k\}, \{\Psi_k\}) > n) \geq \nbbP( X(\{\Delta'_k\}, \{\Psi'_k\}) > n)$, $\forall\ n$. 
\end{theorem}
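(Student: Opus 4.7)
The plan is to follow the same template as the proof of Theorem~\ref{thm:Pc_stocorder}, tracking the count random variable $X$ instead of the coverage indicator. First I would rewrite $\sir(x_k)$ in the alternative form~\eqref{eq:SIR_alternate} so that the fading is carried entirely by the ratios $Z_{jk} = g_{jy}/h_{kx_k}$, giving
\begin{equation*}
X(\{\Delta_k\}, \{\Psi_k\}) = \sum_{k \in \ncalK} \sum_{x_k \in \Phi_k} \mathbf{1}\!\left( \frac{P_k \|x_k\|^{-\alpha}}{\sum\limits_{j\in \ncalK}\sum\limits_{y\in \Phi_j \setminus x_k} P_j Z_{jk} \|y\|^{-\alpha}} > \beta_k \right).
\end{equation*}
Conditioned on the BS point processes $\{\Phi_k\}$, this expresses $X$ as a deterministic functional $F(\{Z_{jk}\})$ that is non-increasing in every argument: inflating any single ratio $Z_{jk}$ only shrinks the SIR at the associated $x_k$ and can only strip terms from the indicator sum, never add them.

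Next I would construct a pointwise coupling between the two families of ratios, in the spirit of the proof of Lemma~\ref{lem:gamma_stocorder}. Using the exponential decomposition of Lemma~\ref{lem:equivalence}, realize each $g_{jy}$ and $g'_{jy}$ as sums of $\Psi_j$ and $\Psi'_j \geq \Psi_j$ i.i.d.\ $\exp(1)$ variables drawn from a common pool, and each $h_{kx_k}$ and $h'_{kx_k}$ as sums of $\Delta_k$ and $\Delta'_k \leq \Delta_k$ i.i.d.\ $\exp(1)$ variables from another common pool. On the resulting coupled probability space one simultaneously has $g'_{jy} \geq g_{jy}$ and $h'_{kx_k} \leq h_{kx_k}$ pointwise, hence $Z'_{jk} \geq Z_{jk}$ for every $(j,y,k,x_k)$ jointly. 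Combined with the monotonicity of $F$, this yields $X \geq X'$ pointwise almost surely. Since $X \geq X'$ implies $\mathbf{1}(X > n) \geq \mathbf{1}(X' > n)$ for every $n$, taking expectations (over fading first and then over the shared point process realization) yields $\nbbP(X > n) \geq \nbbP(X' > n)$, which is the claimed stochastic dominance.

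The main obstacle, and the reason for invoking an explicit coupling rather than citing Lemma~\ref{lem:func_stocorder} as a black box, is that a single $h_{kx_k}$ appears in the denominators of many $Z_{jk}$'s (one for each interferer $y$), so the components of $\{Z_{jk}\}$ are not mutually independent and marginal stochastic dominance alone need not transfer to the multivariate functional $F$. The exponential-sum construction above sidesteps this by producing a single joint coupling under which all the componentwise inequalities hold simultaneously; combined with the element-wise monotonicity of $F$, this is precisely what is required to push the inequality through.
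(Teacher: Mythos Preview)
Your proposal is correct and follows the same skeleton as the paper's proof: rewrite $X$ via~\eqref{eq:SIR_alternate} so that all fading sits in the ratios $Z_{jk}=g_{jy}/h_{kx_k}$, observe that $X$ is element-wise non-increasing in these ratios, and then compare the two systems via stochastic dominance of the $Z$'s obtained from Lemma~\ref{lem:gamma_stocorder}.

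The one substantive difference is the point you flag in your last paragraph. The paper simply invokes Lemma~\ref{lem:func_stocorder} as in the proof of Theorem~\ref{thm:Pc_stocorder}, treating the marginal orderings $Z_{jk}\leq_{\rm st} Z'_{jk}$ as sufficient input. You correctly note that the collection $\{Z_{jk}\}$ is not independent, since every interferer ratio associated with a given candidate $x_k$ shares the same denominator $h_{kx_k}$; marginal stochastic dominance alone does not in general propagate through a monotone functional of dependent arguments. Your remedy---coupling at the level of the underlying exponential summands so that $g'_{jy}\geq g_{jy}$ and $h'_{kx_k}\leq h_{kx_k}$ hold \emph{simultaneously} for all indices on one probability space---is exactly what is needed to make the argument airtight, and it applies equally to Theorem~\ref{thm:Pc_stocorder}. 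So your route is not really different from the paper's; it is the same argument with the coupling made explicit where the paper leaves it implicit in its appeal to Lemma~\ref{lem:func_stocorder}.
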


\begin{proof}
Using the alternate expression of $\sir$ given by \eqref{eq:SIR_alternate}, express $X(\{\Delta_k\}, \{\Psi_k\})$ in terms of the channel power gains as
\begin{align} 
\sum_{k \in \ncalK} \sum_{x_k \in \Phi_k} 1 \left( \frac{P_k \|x_k\|^{-\alpha}}{\sum_{k\in \ncalK} \sum_{y\in \Phi_j \setminus x_k} P_j Z_{jk} \|y\|^{-\alpha}} > \T_k \right),
\end{align}
where $Z_{jk} = g_{jy}/h_{kx_k}$ is the ratio of the two Gamma random variables. For another system with $Z'_{jk} = g'_{jy}/h'_{kx_k}$, $Z_{jk} \leq_{\rm st} Z'_{jk}$ if $\Delta_k \geq \Delta'_k$ and $\Psi_j \leq \Psi'_j$, which follows from Lemma~\ref{lem:gamma_stocorder}. The result now follows on the same lines as the proof of Theorem~\ref{thm:Pc_stocorder} using Lemma~\ref{lem:func_stocorder} along with the fact that $X(\{\Delta_k\}, \{\Psi_k\})$ is an element-wise non-increasing function of $Z_{jk}$.
\end{proof}

\begin{remark}[Tight Bound in case of SDMA] \label{rem:tight}
One of the useful consequences of Theorem~\ref{thm:X_stocorder} is the prediction of the tightness of the upper bound for SDMA. One interpretation of the above result is that the bound gets tighter when all the BSs serve more users, i.e., $\Delta_k$ decreases and $\Psi_k$ increases for all the tiers. A limiting case is that of full SDMA, where the number of users served by each BS is equal to the number of its transmit antennas. Beyond this point, the bound gets tighter with the addition of more transmit antennas keeping $\Delta_k=1$. We revisit these observations in the numerical results section and show that the bound is in fact surprisingly tight even for two transmit antennas down to very low target $\sir$s.
\end{remark}

In the rest of this section, we will mainly focus on the full SDMA case. Recall that in this case $\Delta_k = 1$ and $\Psi_j = M_j$ and the coverage probability upper bound is given by Corollary~\ref{thm:SDMAmain}. As argued in Remark~\ref{rem:tight} and validated in the numerical results section, the closed form upper bound is tight and can be used as an approximation for the coverage probability. For simplicity we will use equality instead of an approximation.

\begin{remark}[Similarity with $\pc$ in SISO case]
The coverage probability expression derived for full SDMA case in Corollary \ref{thm:SDMAmain} has a striking similarity with the coverage probability in the SISO case derived in~\cite{DhiGanJ2012}. The only difference is that the constant $C(\alpha, M_j)$ in that case is simply $C(\alpha) = \frac{2 \pi^2 \csc\left(\frac{2 \pi}{\alpha} \right)}{\alpha}$.
\end{remark}

To facilitate direct comparison of the full SDMA and the SISO cases, we need to understand the relationship between $C(\alpha)$ and $C(\alpha,M)$. Let us take a closer look at the expression of $C(\alpha, M)$ given by \eqref{eq:CalphaM}.
First note that $C(\alpha, M)$ is an increasing function of $M$. Now let us evaluate $C(\alpha, 1)$:
\begin{align}
C(\alpha,1) &= \frac{2 \pi}{\alpha} B\left(\frac{2}{\alpha}, 1- \frac{2}{\alpha}\right)\\
& = \frac{2 \pi}{\alpha} \Gamma\left(\frac{2}{\alpha}\right) \Gamma\left(1 - \frac{2}{\alpha}\right) = \frac{2 \pi^2 \csc\left(\frac{2 \pi}{\alpha} \right)}{\alpha},
\end{align}
where the last step follows by Euler's reflection formula. Hence $C(\alpha,1)$ is the same as $C(\alpha)$ derived for the SISO case in~\cite{DhiGanJ2012}. From the monotonicity of $C(\alpha,M)$ it follows that $C(\alpha,M) > C(\alpha)$ $\forall M>1$.

\begin{remark}[Full SDMA vs. SISO coverage]
Keeping all the system parameters the same, the full SDMA coverage is always lower than that of the SISO case. This is consistent with the coverage probability ordering results derived in the previous section.
\end{remark}

\begin{remark}[Scale invariance in open access HetNets] \label{rem:scale_invariant}
The full SDMA coverage probability is invariant to the density of the BSs, number of tiers and the transmit powers when the target $\sir$s and the number of transmit antennas are the same for all the tiers in open access HetNets. The coverage probability in this case is given by $\pc = \frac{\pi }{C(\alpha,M)}\T^{-\frac{2}{\alpha}}$. This result is again similar to the SISO result where the coverage probability reduces to $\pc = \frac{\pi }{C(\alpha)}\T^{-\frac{2}{\alpha}}$. The scale invariance result does not hold for closed access HetNets.
\end{remark}

\subsection{Area Spectral Efficiency}
Although the comparison of various system configurations and transmission techniques is quite conclusive in terms of coverage probability and the rate per user, it does not directly account for the fact that some techniques, such as SDMA, serve higher number of users than the others, such as SU-BF, and may result in higher sum data rate. To account for this fact, we consider ASE, which gives the number of bits transmitted per unit area per unit time per unit bandwidth. For a multi-tier setup, it can be formally defined as 
\begin{align}
\eta = \sum_{k \in \ncalK} \Psi_k \lambda_k \log_2(1+\T_k) \pc^{(k)},
\end{align}
where $\pc^{(k)}$ is the per tier coverage probability, i.e., coverage probability conditional on the serving BS being in the $k^{th}$ tier. Since the derivations of per tier coverage probabilities are out of the scope of this paper, for analytical comparisons we limit our discussion to the cases where $\pc^{(k)} = \pc$ for all tiers. This is guaranteed for the SISO case when the target $\sir$s are the same for all tiers and for SDMA when additionally the number of antennas per BS are also the same for all tiers. Recall that the coverage probabilities under these assumptions are scale invariant, as discussed in Remark~\ref{rem:scale_invariant}. 
The ASE under these assumptions can be expressed as
\begin{align}
\eta =  \pc \log_2(1+\T) \sum_{k \in \ncalK} \Psi_k \lambda_k.
\end{align}
We first compare the ASE of the full SDMA and the SISO cases below. The ASE for full SDMA case is
\begin{align}
\eta_{M} = M \frac{\pi }{C(\alpha,M)}\T^{-\frac{2}{\alpha}} \log_2(1+\T) \sum_{k \in K}\lambda_k,
\end{align}
and for the SISO case is
\begin{align}
\eta_{S} = \frac{\pi }{C(\alpha)}\T^{-\frac{2}{\alpha}} \log_2(1+\T) \sum_{k \in K}\lambda_k.
\end{align}
The ratio of the ASEs can be expressed as
\begin{align}
\frac{\eta_M}{\eta_S} = \frac{M C(\alpha)}{C(\alpha,M)}.
\label{eq:ASE_ratio}
\end{align}
Using the fact that
\begin{align} 
\lim_{M \to \infty} \frac{C(\alpha,M)}{M^{\frac{2}{\alpha}}} = \pi \Gamma(1-2/\alpha),
\end{align} 
the ratio of the ASEs can be approximated as
\begin{align}
\frac{\eta_M}{\eta_S} &\approx \frac{M^{1-\frac{2}{\alpha}} C(\alpha)}{\pi\Gamma(1-2/\alpha)} \label{eq:ASE_ratioapprox}
=\Gamma\left(1+\frac{2}{\alpha}\right) M^{1-\frac{2}{\alpha}},
\end{align}
which shows that the ratio grows with the number of antennas when $\alpha>2$. In the next section we will validate this observation and show that the ASE in case of full SDMA is always higher than the SISO case. As shown in~\cite{DhiKouC2012}, the approximation is surprisingly tight even for small $M$. 

Another relevant comparison is that of full SDMA and SISO when both the systems are serving the same density of users. To facilitate this comparison, the densities of BSs for SDMA case will be lower than the SISO case by a factor of M. This comparison will provide insights into whether it is beneficial in terms of ASE to deploy $\lambda$ BSs per unit area with $M$ antennas or $M \lambda$ single-antenna BSs per unit area. In that case, the ratio of the ASEs can be approximated as
\begin{align}
\frac{\eta_M}{\acute{\eta_S}} \approx \Gamma\left(1+\frac{2}{\alpha}\right) M^{-\frac{2}{\alpha}}, \label{eq:ASE_ratioapprox2}
\end{align}
which shows that the ratio decreases sublinearly with the number of antennas when $\alpha>2$. We will validate this observation in the next section and show that the ASE in SISO case is higher than that of the full SDMA case.

So far, we have focused only on the comparison between full SDMA and SISO cases, mainly because the coverage probability expressions for these cases are known in closed form. Since the coverage probability upper bound for SU-BF cannot be reduced to closed form and moreover the tightness of the bound is questionable, we cannot perform similar comparisons with SU-BF unless we derive a simple coverage probability expression, which is out of the scope of this paper. That being said, it is possible to compare the three cases in the very low target $\sir$ regime. Note that this case is of practical relevance since current wireless standards support communication down to very low $\sir$s, which is about $-6$ dB for 3GPP LTE~\cite{3GPP2010b}. 
\begin{prop}[ASE comparison for vanishingly small $\sir$ targets] \label{prop:ASE}
For the same infrastructure, i.e., the densities of BSs, the ASEs of SU-BF and SISO are the same and of full SDMA is higher than the both when $\T_k \rightarrow 0$ for all $k$.
\end{prop}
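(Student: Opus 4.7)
The plan is to take the limit $\T_k \to 0$ directly in the ASE expression $\eta = \sum_{k \in \ncalK} \Psi_k \lambda_k \log_2(1+\T_k) \pc^{(k)}$ for each of the three transmission techniques and compare the leading-order behaviors. The values of the multi-user parameter entering the prefactor are $\Psi_k = 1$ for both SISO and SU-BF and $\Psi_k = M_k$ for full SDMA, so any difference between the three ASEs as $\T_k \to 0$ must come either from the $\pc^{(k)}$ factor or from this prefactor.

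The crucial step is to argue that the per-tier coverage probability $\pc^{(k)}$ tends to one as $\T_k \to 0$, in each of the three cases. For the serving link $x_k^\star$, write $\sir(x_k^\star) = P_k h_{kx_k^\star}\|x_k^\star\|^{-\alpha}/I_{x_k^\star}$: the numerator is strictly positive almost surely (the Gamma-distributed gain $h_{kx_k^\star}$ is positive a.s., and the closest candidate BS lies at finite distance a.s.\ under a PPP of positive intensity), while the denominator is almost surely finite by the standard integrability of PPP shot noise for $\alpha > 2$. Hence $\sir(x_k^\star) > 0$ a.s., and since $\{\sir(x_k^\star) > \T_k\} \uparrow \{\sir(x_k^\star) > 0\}$ as $\T_k \downarrow 0$, monotone continuity of probability yields $\pc^{(k)} \to 1$, independently of which of the three precoding choices determines the distributions of $h_{kx_k^\star}$ and $I_{x_k^\star}$.

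Substituting $\pc^{(k)} \to 1$ back into the ASE expression, the leading-order asymptotics are $\eta_{\text{SISO}} \sim \eta_{\text{SU-BF}} \sim \sum_k \lambda_k \log_2(1+\T_k)$ and $\eta_{\text{SDMA}} \sim \sum_k M_k \lambda_k \log_2(1+\T_k)$ as $\T_k \to 0$. The SISO and SU-BF expressions coincide because they share $\Psi_k = 1$, while full SDMA strictly dominates both whenever some $M_k > 1$, by the extra multiplexing factor $M_k$ in each tier. I do not foresee a real obstacle; the only care needed is in justifying $\pc^{(k)} \to 1$, which relies on the $\alpha > 2$ integrability of PPP shot noise and the a.s.\ positivity of the serving-link channel gain.
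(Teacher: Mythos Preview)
Your proposal is correct and follows essentially the same approach as the paper: both arguments reduce the comparison to the prefactors $\Psi_k$ by showing that $\pc^{(k)}\to 1$ as $\T_k\to 0$, then use $\Psi_k=1$ for SISO and SU-BF versus $\Psi_k=M_k>1$ for full SDMA. Your justification of $\pc^{(k)}\to 1$ via almost-sure positivity of the serving-link $\sir$ (finite shot noise for $\alpha>2$, strictly positive Gamma gain) is more detailed than the paper's one-line appeal to monotonicity, but the structure of the argument is identical.
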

\begin{proof}
The proof follows from the fact that coverage probability is an element wise decreasing function of $\{\T_k\}$ and approaches $1$ when $\T_k \rightarrow 0$ for all $k$. Therefore, the ASE for this regime is
\begin{align}
\eta = \sum_{k \in \ncalK} \Psi_k \lambda_k \log_2(1+\T_k),
\end{align}
from which the result follows by the fact that $\Psi_k = M_k > 1$ for all $k$ for full SDMA and $\Psi_k = 1$ for all $k$ for both SU-BF and SISO cases.
\end{proof}
We will revisit this result along with the ASE comparison in the moderate and high target $\sir$ regimes in the next section.

\section{Numerical Results}

\begin{figure}[t]
\centering
\includegraphics[width=\columnwidth]{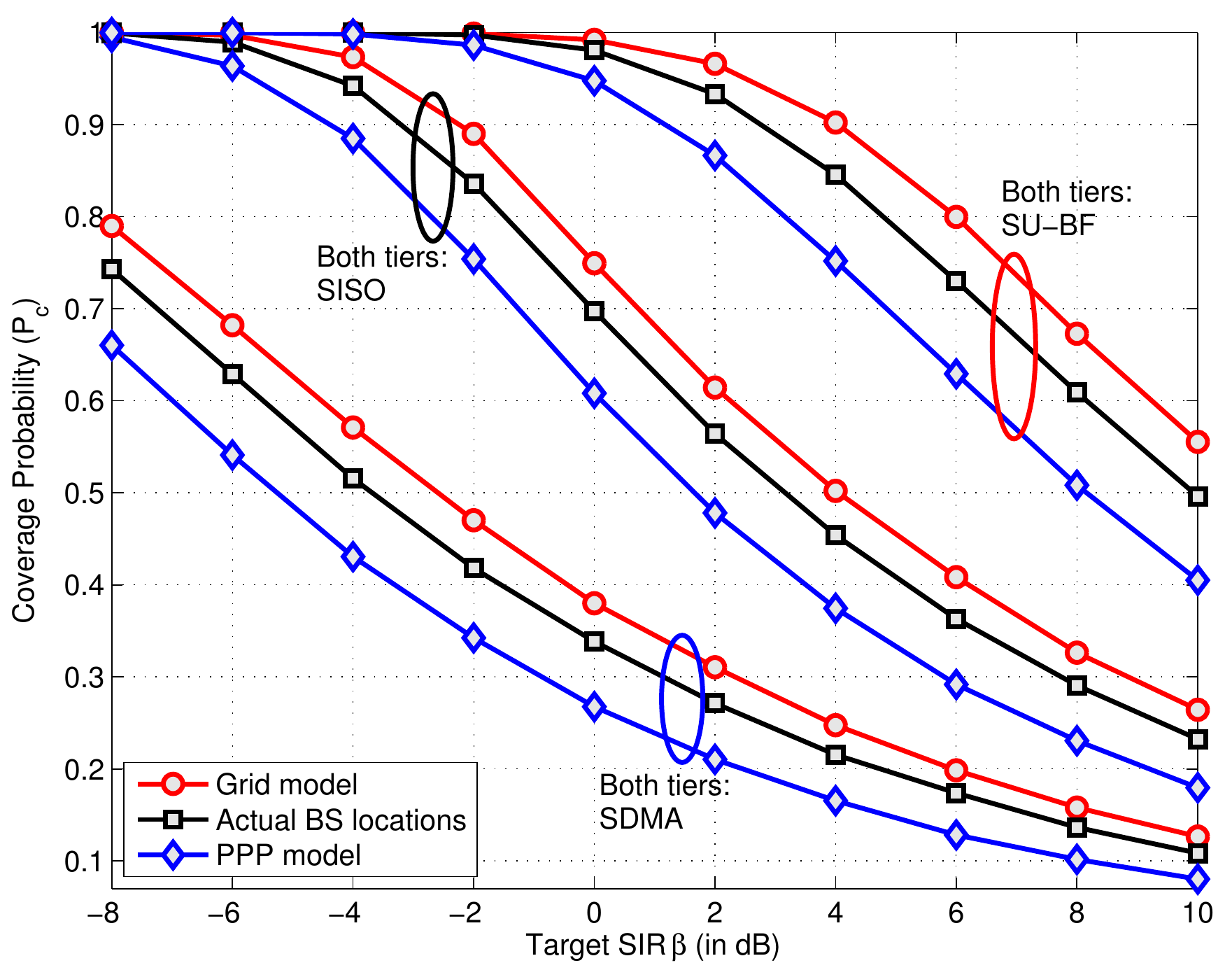}
\caption{Coverage probability for three different models for macrocells. The second tier is PPP in all the cases. ($K=2, P = [1, .01], \lambda_2 = 2\lambda_1, \alpha = 3.8$).  The number of antennas in case of multi-antenna tiers is $M = 4$.}
\label{fig:Grid_Actual_PPP}
\end{figure}

\begin{figure}[t]
\centering
\includegraphics[width=\columnwidth]{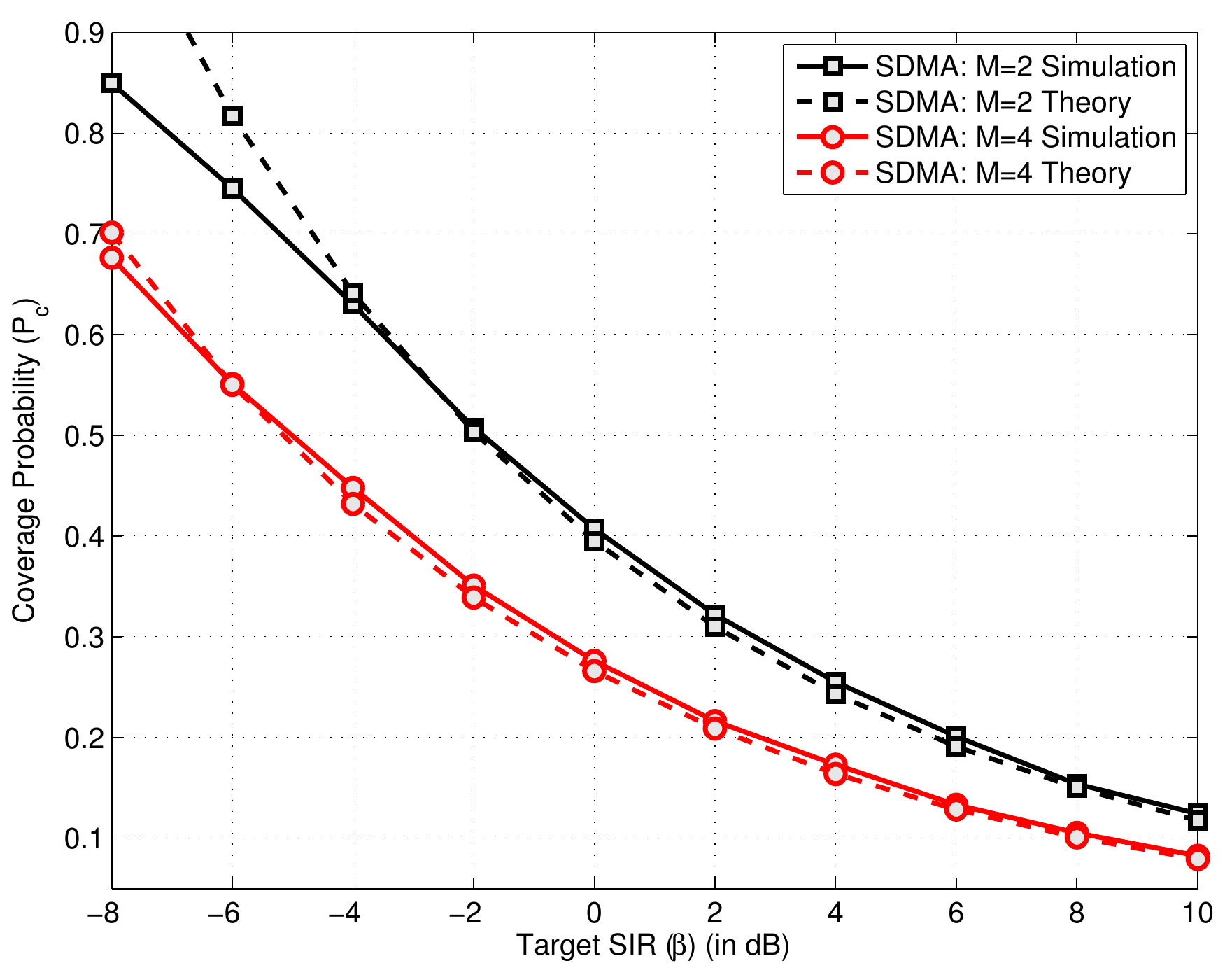}
\caption{Coverage probability of a two-tier HetNet when both tiers perform full SDMA ($K=2, P = [1, .01], M_1 = M_2 = M, \lambda_2 = 2\lambda_1, \beta_1 = \beta_2, \alpha = 3.8$).}
\label{fig:SDMA_bound}
\end{figure}


Since there is a slight difference in the simulation of the proposed multi-antenna model and the ones proposed in the literature for the SISO HetNets, e.g.,~\cite{DhiGanJ2012}, we will briefly summarize the simulation procedure before explaining the results. Choose a sufficiently large window and simulate the locations of different classes of BSs as realizations of independent PPPs of given densities. Associate two independent marks $h_x$ and $g_x$ with each BS. Assuming the typical user lies at the origin, calculate the desired signal strength from each BS using the sequence of marks $\{h_{kx}\}$ and the interference power using the sequence $\{g_{kx}\}$. Calculate the received $\sir$ from each BS. The user is now said to be in coverage if the received $\sir$ from at least one of the BSs belonging to the permissible tiers is more than the corresponding target. Repeating this procedure sufficient number of times, we have an estimate of the coverage probability. Using this procedure, we first validate the location model and establish the tightness of the upper bound for SDMA in the following subsection. Note that since we are focusing on the interference limited regime in this paper, the absolute values of transmit powers and deployment densities are irrelevant. The results only depend on their respective ratios.

\subsection{Model validation and tightness of the upper bound on $\pc$}
Recall that while the PPP model is sensible for small cells, especially the ones deployed without planning, such as femtocells, it is dubious for centrally planned tiers, such as macrocells. Therefore, to validate the proposed location model for MIMO HetNets, we consider following three setups for a two-tier HetNet with a special focus on macrocells: i) the macrocells are modeled by a hexagonal grid, ii) the macrocell locations are drawn from an actual $4$G deployment over $40 \times 40$ km${^2}$ area~\cite{AndBacJ2011,DhiGanJ2012}, iii) the macrocell locations are drawn from an independent PPP, as in the proposed model. The second tier is modeled as a PPP in all three cases. Note that the actual BS locations used in this comparison can be accurately modeled as a Strauss process, as shown in~\cite{TayDhiC2012}. For each of these three location models, we further consider three setups: i) both tiers have 4 transmit antennas per BS and perform SU-BF, i.e. $M_k = 4, \Psi_k=1$ for all $k$, ii) both tiers have a single transmit antenna per BS and perform SISO transmission, i.e. $M_k = 1, \Psi_k=1$ for all $k$, and iii) both tiers have 4 transmit antennas per BS and perform full SDMA, i.e., $M_k = 4, \Psi_k=4$ for all $k$. The simulation procedure remains the same as described above for the PPP model, except of course that the macrocell locations are appropriately drawn from either PPP, grid or actual location data for each setup. From the numerical results presented in Fig.~\ref{fig:Grid_Actual_PPP}, we note that in all three setups, the proposed model provides a lower bound on the coverage probability of an actual $4$G deployment and is about as accurate as the grid model, which provides an upper bound. These observations are consistent with those of \cite{DhiGanJ2012} for SISO HetNets. In the rest of this section, we will focus solely on the proposed model, i.e., each tier is modeled as an independent PPP.

After validating the location model, we numerically evaluate the coverage probability of a two-tier HetNet in full SDMA case and compare the results with the upper bound derived in Corollary~\ref{thm:SDMAmain} in Fig.~\ref{fig:SDMA_bound}. As stated in Remark~\ref{rem:tight}, the bound is tight down to very low target $\sir$s. Even for $M=2$, the bound is tight down to about $-4$ dB. A slight gap, although negligible, at moderate to high target $\sir$s is due to the border effects in simulation, also observed earlier in~\cite{DhiGanJ2012}. In particular, the simulation is performed over a finite window whereas the analysis assumes BSs over an infinite plane. Nevertheless, this validates our assumption of considering the upper bound as an approximation of the coverage probability in case of full SDMA in the previous section.


\begin{figure}[t]
\centering
\includegraphics[width=\columnwidth]{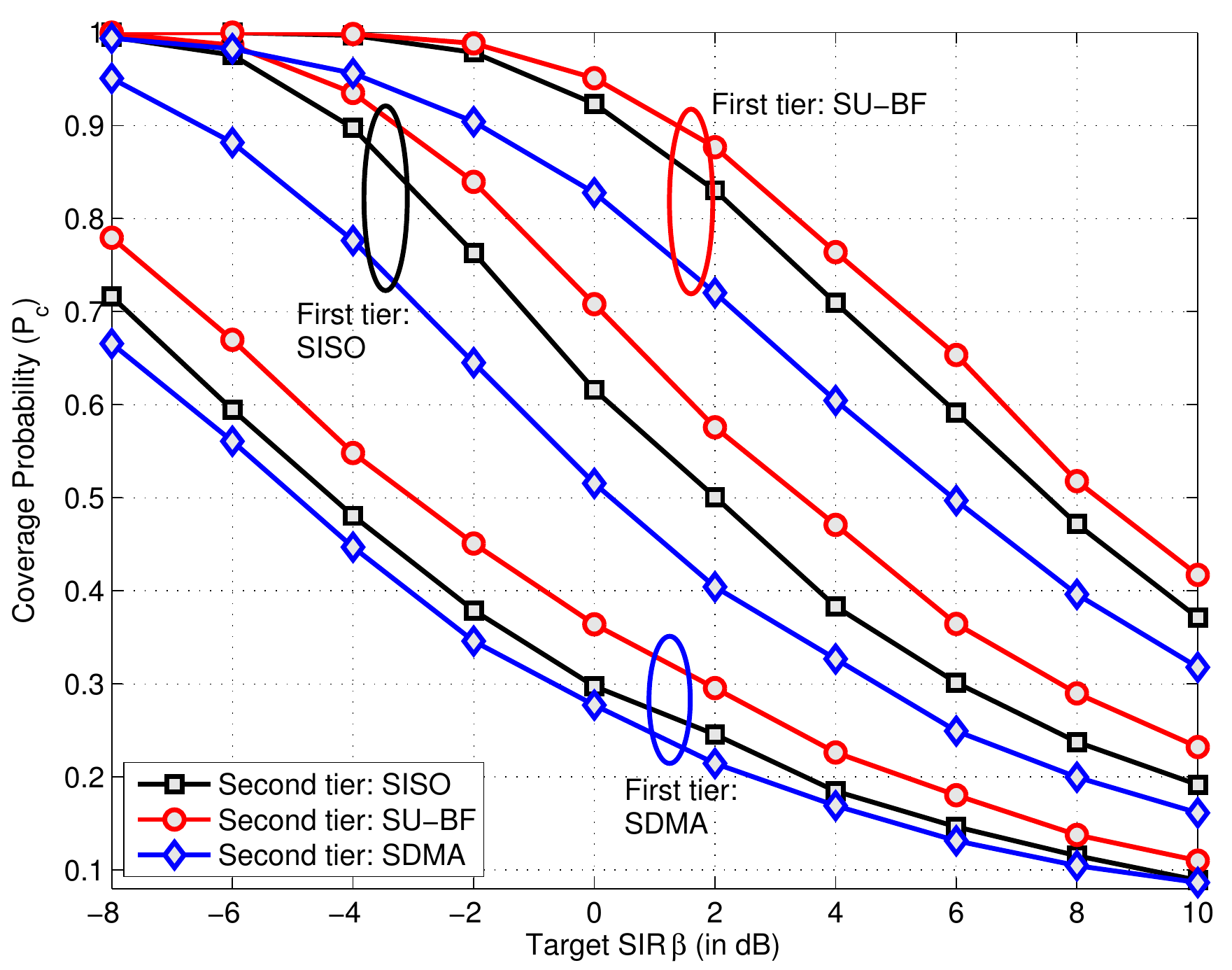}
\caption{Comparison of the coverage probability in a two-tier HetNet for various combinations of multi-antenna techniques ($K=2, P = [1, .01], \lambda_2 = 2\lambda_1, \beta_1 = \beta_2, \alpha = 3.8$). The number of antennas in case of multi-antenna tiers is $M = 4$.}
\label{fig:TierAdd_Pc}
\end{figure}

\begin{figure}[t]
\centering
\includegraphics[width=\columnwidth]{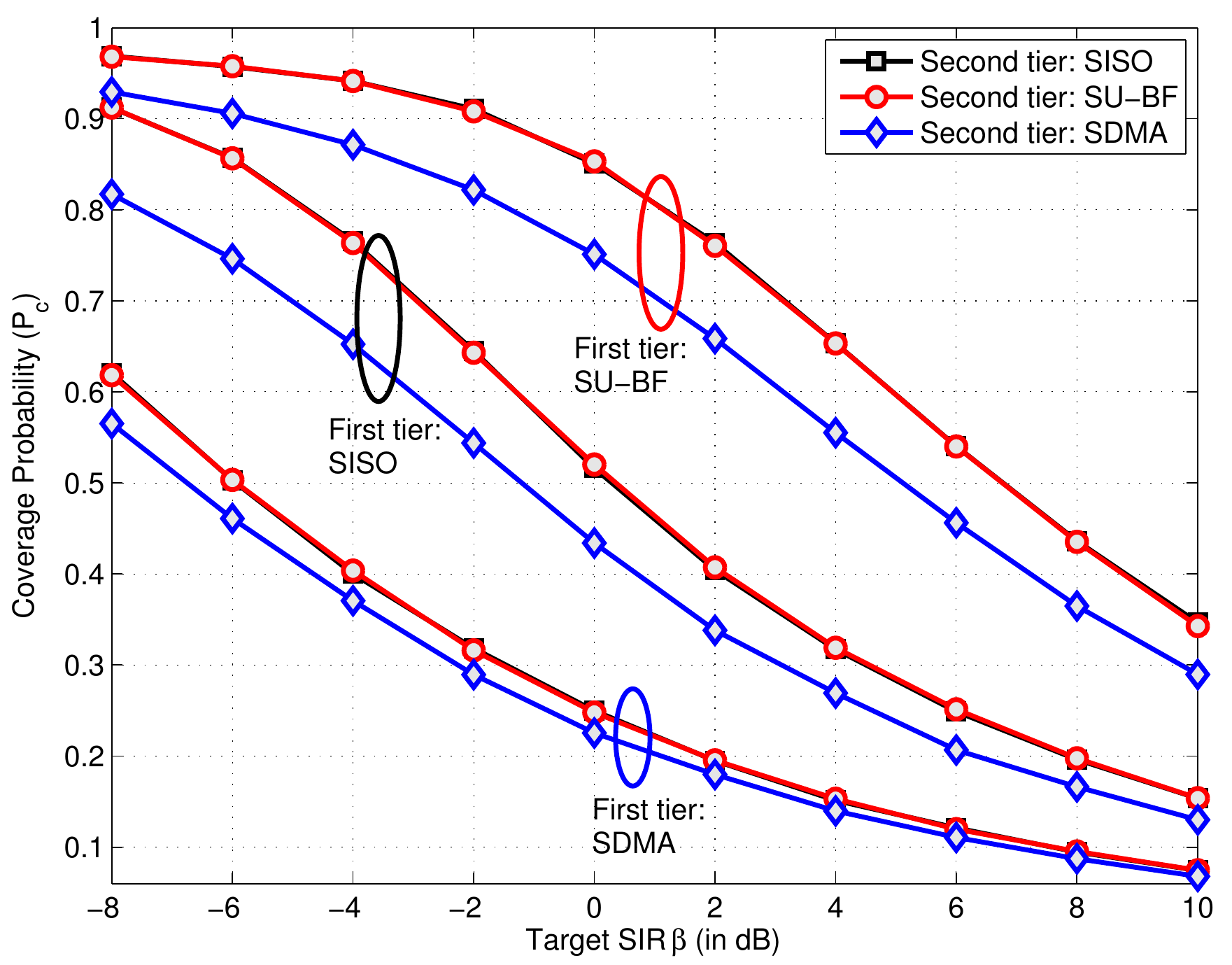}
\caption{Comparison of the coverage probability in a twotier HetNet where the second tier is in closed access ($K=2, P = [1, .01], \lambda_2 = 2\lambda_1, \beta_1 = \beta_2, \alpha = 3.8$). The number of antennas in case of multi-antenna tiers is $M = 4$.}
\label{fig:TierAdd_Pc_CA}
\end{figure}

\subsection{Effect of adding additional tier on coverage probability}
We study the effect of adding a second tier on the coverage probability of a cellular network in Figs.~\ref{fig:TierAdd_Pc} and \ref{fig:TierAdd_Pc_CA}, where both the first and the second tier can be one of the three possible types: i) SISO, ii) full SDMA, iii) SU-BF. In Fig.~\ref{fig:TierAdd_Pc}, we assume that both tiers are in open access. The result shows that the case where both tiers perform SU-BF results in the highest coverage, whereas the case where both tiers perform full SDMA leads to the lowest coverage. This is because SU-BF case has an additional beamforming gain; in addition to the proximity gain enjoyed by the SISO case. These observations are consistent with the coverage ordering results derived in Section~\ref{sec:ordering}. In Fig.~\ref{fig:TierAdd_Pc_CA}, we study the effect of adding a second tier that is in closed access, i.e., a typical user cannot connect to the second tier BSs. The performance of various transmission techniques is in the same order as for the open access case studied in Fig.~\ref{fig:TierAdd_Pc}. Interestingly, the coverage probability of a typical user is the same irrespective of whether the new closed access tier is doing SISO transmission or SU-BF. This is due to the fact that the channel power distribution of the interfering links in both the cases is $\Gamma(1,1)$, i.e., $\exp(1)$.

\subsection{Area spectral efficiency}

\begin{figure}[t]
\centering
\includegraphics[width=\columnwidth]{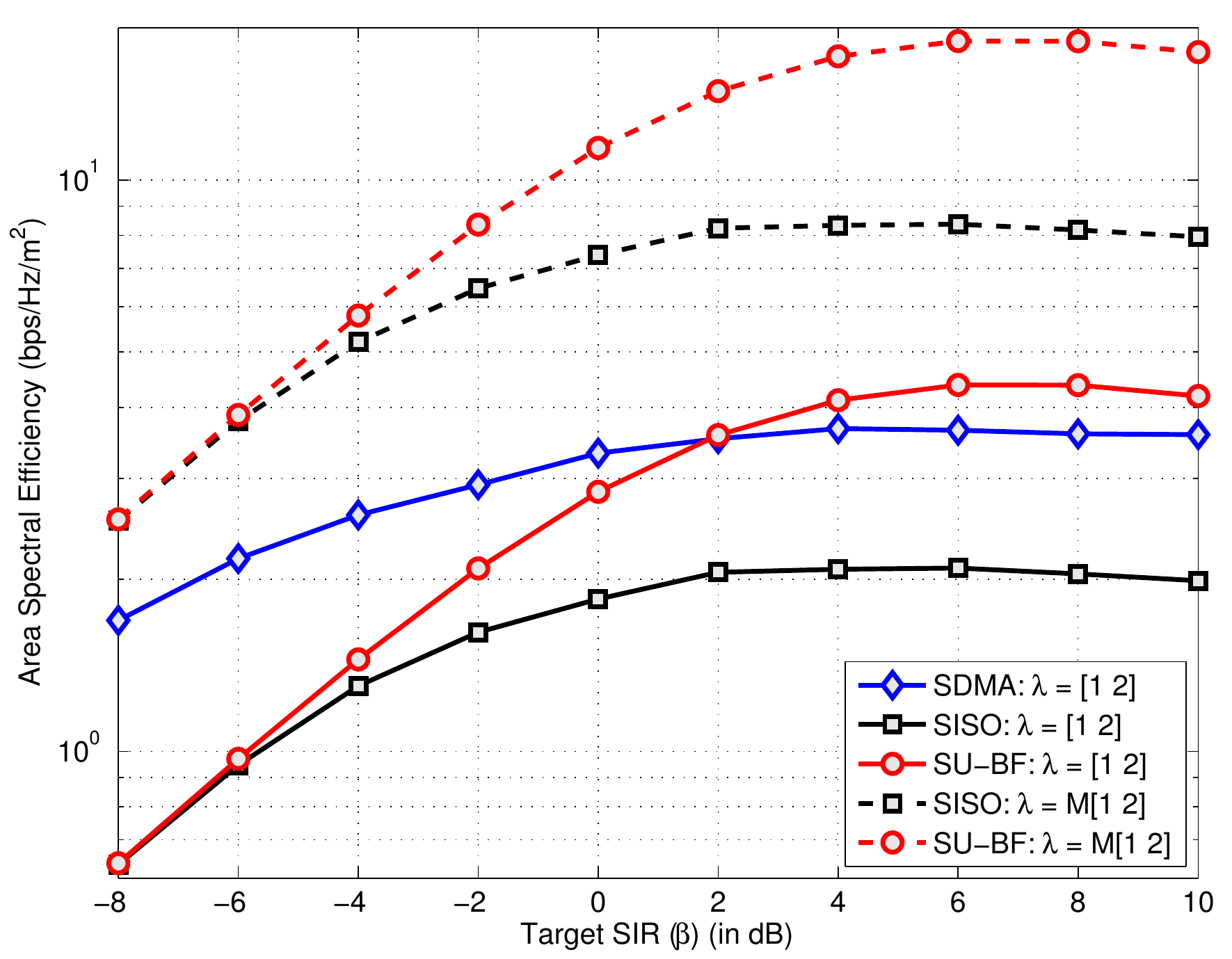}
\caption{Comparison of the ASE in a two-tier HetNet ($K=2, P = [1, .01], \lambda_2 = 2\lambda_1, \beta_1 = \beta_2, \alpha = 3.8$). The number of antennas in case of multi-antenna tiers is $M = 4$. Full SDMA corresponds to $\Psi = M$.}
\label{fig:ASE_comp}
\end{figure}

We compare the ASEs of SU-BF, SISO, and full SDMA transmission techniques in a 2-tier HetNet in Fig.~\ref{fig:ASE_comp}. Both tiers are assumed to follow the same transmission technique and the ASE result for SU-BF is computed numerically by computing the per tier coverage probability. For comparison, we consider two cases, one in which the density of the BSs in the three setups remain the same, and the other in which the densities are adjusted such that the density of users served in the three cases is the same. In the first case, SU-BF, which always outperforms SISO, even outperforms full SDMA in the high target $\sir$ regime despite serving smaller number of users. The trends in the low target $\sir$ regime are consistent with Proposition~\ref{prop:ASE}. In the second case, where the density of the users is the same in all the cases, the ordering of the three transmission techniques in terms of ASE is the same as that of coverage and rate per user.

\subsection{Effect of having a fraction of BSs in closed access}
\begin{figure}[t]
\centering
\includegraphics[width=\columnwidth]{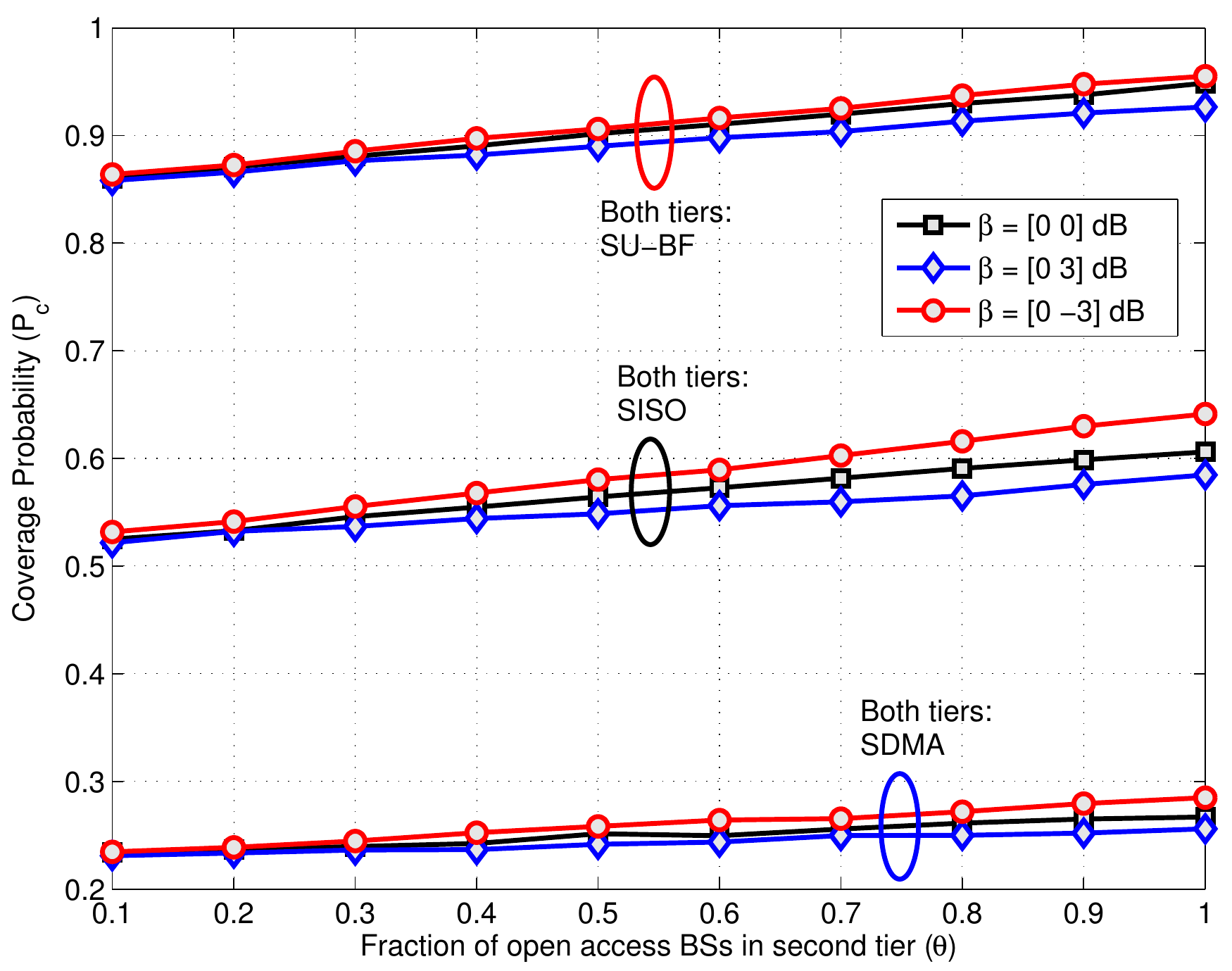}
\caption{Coverage probability when a fraction $1-\theta$ of the second tier BSs are in closed access. ($K=2, P = [1, .01], \lambda_2 = 2\lambda_1, \alpha = 3.8$).  The number of antennas in case of multi-antenna tiers is $M = 4$.}
\label{fig:PcFracCA}
\end{figure}

Before concluding this section, it is important to note that under PPP assumption, the proposed model is applicable even if a given fraction of BSs of a particular tier independently operates in open or closed access. In such a case, we can divide the original tier into two tiers with appropriate densities, which is enabled by the fact that independently thinning a PPP leads to two independent PPPs. For example, for a two-tier HetNet where all the BSs of the first tier and the fraction $0\leq \theta \leq 1$ of the second tier operate in open access while the rest in closed access, the second tier can be divided into two tiers modeled as independent PPPs $\Phi_{2^{(1)}}$ (open access tier) and $\Phi_{2^{(2)}}$ (closed access tier) with densities $\lambda_{2^{(1)}} = \theta \lambda_2$ and $\lambda_{2^{(2)}} = (1-\theta)\lambda_2$, respectively. Therefore, the original two-tier network can be reduced to an equivalent three tier network, where two tiers are in open access and one is in closed access. The numerical results for such a scenario are presented in Fig.~\ref{fig:PcFracCA} for various combinations of target-$\sir$s. Note that the coverage probability of a typical user under all considered transmission schemes increases linearly with the fraction of open access BSs $\theta$. Secondly, the loss in coverage probability with decreasing $\theta$ is higher when the target $\sir$ for the closed access BSs is lower than that of the first tier open access BSs. Similarly, the loss is lower when the target $\sir$ for the closed access BSs is higher than the first tier BSs. This is because when the target $\sir$ for the second tier is lower than the first tier, the second tier BSs would have contributed more to the coverage probability had they been in open access than the case when their target $\sir$ is higher than the first tier BSs.

\section{Conclusions}
In this paper, we have proposed a new tractable downlink model for multi-antenna HetNets. For any given BS distribution, we derived ordering results for coverage probability and per user rate to compare different transmission techniques, such as SDMA, SU-BF and baseline SISO transmission. In addition to significantly generalizing the state of the art PPP based random spatial models for cellular networks, this approach circumvents the need for deriving explicit expressions for coverage and rate, which may not reduce to simple closed forms in all the cases. Our analysis demonstrates that for a given total number of transmit antennas, it is preferable to spread them across many single-antenna BSs vs. fewer multi-antenna BSs, both in terms of coverage and rate per user. We also show that SU-BF provides higher coverage and rate per user than both SISO and SDMA due to an additional beamforming gain. To account for the fact that certain transmission techniques, such as SDMA, serve more users and may provide higher sum-rate, we derive an upper bound on the coverage probability assuming an independent PPP model for BS locations and use it to compare different transmission techniques in terms of ASE. 

Future work could consider HetNets with multi-antenna receivers and investigate potential gains by performing interference cancelation and/or receiver combining. Further extensions to this work could include the effect of opportunistic scheduling on the coverage probability and spatial reuse. Another important extension of the modeling tools developed in this paper is to the uplink of cellular networks~\cite{NovDhiJ2012}.	

\appendices

\section{Signaling Preliminaries} \label{appendix:signal}

The received signal $y_k$ from $k^{th}$ tier BS at a typical user located at the origin is given by
\begin{align}
y_{k} &= \sqrt{P_k} \|x_k\|^{-\frac{\alpha}{2}}\mathbf{v}_{kx_k}^{*}\mathbf{z}_{k} + \sum_{k\in \ncalK} \sum_{y\in \Phi_j \setminus x_k} \sqrt{P_j}\|y\|^{-\frac{\alpha}{2}}\mathbf{u}_{jy}^{*}\mathbf{z}_j,
\end{align}
where $P_k$ is the per user transmit power of $k^{th}$ tier BS, and $\mathbf{z}_k \in \mathbb{C}^{M_k \times 1}$ is the normalized transmit signal vector.
The channel vector from $k^{th}$ tier BS to a typical user located at origin is denoted by $\mathbf{v}_{kx} \in \mathbb{C}^{M_k \times 1}$ and for the interfering link from a $j^{th}$ tier BS located at $y \in \R^2$ is denoted by $\mathbf{u}_{jy} \in \mathbb{C}^{M_j \times 1}$. The vectors $\mathbf{v}, \mathbf{u}$ are assumed to have i.i.d. $\mathcal{CN}(0,1)$ entries, independent across BSs and of the user distances. 


This paper assumes linear precoding, in which the $k^{th}$ tier BS multiplies the data symbol $s_{k,i}$ destined for the $i^{th}$ user, for $1 \leq i \leq \Psi_k$, by $\mathbf{w}_{k,i}$ so that the transmitted signal is a linear function, i.e. $\mathbf{z}_k = \displaystyle \sum_{i=1}^{\Psi_k}\mathbf{w}_{k,i}s_{k,i}$.
When zero-forcing beamforming with perfect CSI is employed to serve $\Psi_k$, $\Psi_j$ users in tier $k$, $j$ respectively, the columns of the precoding matrix $\mathbf{W}_k = [\mathbf{w}_{k,i}]_{1 \leq i \leq \Psi_k} \in \mathbb{C}^{M_k \times \Psi_k}$ equal the normalized columns of $\tilde{\mathbf{V}_k}^{*}(\tilde{\mathbf{V}_k} \tilde{\mathbf{V}_k}^{*})^{-1} \in \mathbb{C}^{M_k \times \Psi_k}$, for $\tilde{\mathbf{V}} = [\tilde{\mathbf{v}}_{1}, \  \ldots, \tilde{\mathbf{v}}_{k} \ \ldots \ \tilde{\mathbf{v}}_{\Psi_k}]^{*} \in \mathbb{C}^{\Psi_k \times M_k}$ being the concatenated matrix of channel directions, where the direction of each vector channel is represented as $\tilde{\mathbf{v}}_k \triangleq \frac{\mathbf{v}_k}{\left\|\mathbf{v}_k\right\|}$.
The desired channel power is given by $h_{kx} = |\mathbf{v}_{kx}^{*} \mathbf{w}_{k,k}|^2 = |\tilde{\mathbf{v}}_{kx}^{*}\mathbf{w}_{k,k}|^2 \cdot \left\|\mathbf{v}_{kx}\right\|^2$ which equals the product of two independent rv's which are distributed as $\textrm{Beta}(M_k-\Psi_k+1,\Psi_k-1)$ and $\Gamma(M_k,1)$, respectively. 
Therefore, the channel power is $h_{kx} \sim \Gamma(\Delta_k, 1)$ with $\Delta_k = M_k - \Psi_k + 1$.
For the distribution of the interfering marks, we assume that the precoding matrices have unit-norm orthogonal columns and that $\mathbf{W}_{j}$ is calculated independently of $\mathbf{u}_{jy}$. Therefore, $\tilde{\mathbf{u}}_{jy}$ and $\mathbf{w}_{j}$ are independent isotropic unit-norm random vectors, and $\left|\tilde{\mathbf{u}}_{jy}^*\mathbf{w}_{j}\right|^2$ is a linear combination of $\Psi_j$ complex normal random variables, i.e. exponentially distributed. Neglecting the spatial correlation, we have that $g_{jy} \sim \Gamma(\Psi_j, 1)$, since it is the sum of $\Psi_j$ i.i.d. exponential random variables.

The case $\Delta_k = 1$ and $\Psi_j = M_j$ is referred to as full SDMA. The case that each BS only serves one user, i.e. $\Psi_k = 1$, using the beamforming vector $\mathbf{w}_{kx} = \tilde{\mathbf{v}}_{kx}$ corresponds to SU-BF or MISO eigen-beamforming. In that case, the channel power is given by $h_{kx} \sim \Gamma(\Delta_k, 1)$ with $\Delta_k = M_k$ and the interference marks as $g_{jy} \sim \Gamma(\Psi_j, 1)$ with $\Psi_j = 1$, $\forall j \in \ncalK$, since the beamforming vectors $\mathbf{w}_{jy}$ used by the $j^{th}$ tier interfering BS are calculated based on $\mathbf{v}_{j}$, i.e. independently of $\mathbf{u}_{jy}$.

\section{Proof of Lemma~\ref{lem:laplace}}  \label{appendix:Laplace}
The Laplace transform of interference $\mathcal{L}_{I}(s) = \E_{I_{x_k}} \left[ e^{-s I} \right]$ can be derived as
\begin{align}
&\E_{I} \left[ e^{-s I} \right] = \E_{I} \left[ e^{-s \sum_{j \in \ncalK} \sum_{y \in \Phi_j} P_j g_{jy} \|y\|^{-\alpha}
} \right]\\ 
&\stackrel{(a)}{=} \prod_{j\in \ncalK} \E \left[ \prod_{y \in \Phi_j} e^{-s P_j g_{jy} \|y\|^{-\alpha} } \right]\\
&\stackrel{(b)}{=} \prod_{j\in \ncalK} \E_{\Phi_j} \left[ \prod_{y \in \Phi_j } \mathcal{L}_{g_{jy}}\left(s P_j \|y\|^{-\alpha} \right) \right]\\
&\stackrel{(c)}{=} \prod_{j\in \ncalK} \exp \left( -\lambda_j \int_{\R^2} \left( 1 -  \mathcal{L}_{g_{jy}}\left(s P_j \|y\|^{-\alpha} \right) \right) \nrmd  y\right)\\
&\stackrel{(d)}{=} \prod_{j\in \ncalK} \exp \left( -\lambda_j \int_{\R^2} \left( 1 -  \frac{1}{(1+ s P_j \|y\|^{-\alpha})^{\Psi_j}} \right) \nrmd  y\right)\\
&= \prod_{j\in \ncalK} \exp \left( -\lambda_j \int_{\R^2} \frac{(1+ s P_j \|y\|^{-\alpha})^{\Psi_j}-1}{(1+ s P_j \|y\|^{-\alpha})^{\Psi_j}} \nrmd  y\right)\\
&\stackrel{(e)}{=} \prod_{j\in \ncalK} \exp \left( -\lambda_j \int_{\R^2} \frac{\sum_{m=1}^{\Psi_j} {\Psi_j \choose m} (s P_j \|y\|^{-\alpha})^m }{(1+ s P_j \|y\|^{-\alpha})^{\Psi_j}} \nrmd  y\right)\\
&= \prod_{j\in \ncalK} \exp \left( -\lambda_j \sum_{m=1}^{\Psi_j} {\Psi_j \choose m}  \int_{\R^2} \frac{(s_{x_k} P_j \|y\|^{-\alpha})^m }{(1+ s_{x_k} P_j \|y\|^{-\alpha})^{\Psi_j}} \nrmd  y\right)\\
&\stackrel{(f)}{=} \prod_{j\in \ncalK} e^{ -2 \pi \lambda_j (s_{x_k} P_j)^\frac{2}{\alpha} \sum_{m=1}^{\Psi_j} {\Psi_j \choose m}  \int_{0}^{\infty} \frac{r^{-\alpha m} }{(1+ r^{-\alpha})^{\Psi_j}}  r\nrmd  r}\\
&\stackrel{(g)}{=} \exp\left( - s_{x_k}^{\frac{2}{\alpha}} \sum_{j \in \ncalK} \lambda_j P_j^{\frac{2}{\alpha}}   C (\alpha, \Psi_j) \right),
\end{align}
where $(a)$ follows from the independence of the tiers, $(b)$ follows from the fact that channel powers are independent of the BS locations, $(c)$ follows from PGFL of PPP~\cite{StoKenB1995}, $(d)$ follows from the Laplace transform of the $g_{jy} \sim \Gamma(\Psi_j, 1)$, $(e)$ follows from Binomial theorem, and $(f)$ follows from converting to Cartesian to polar coordinates, and $(g)$ follows by substituting $(1+r^{-\alpha})^{-1} \rightarrow t$ to convert the integral into Euler's Beta function $B(x,y) = \int_{0}^1 t^{x-1} (1-t)^{y-1} \nrmd  t$.

\bibliographystyle{IEEEtran}
\bibliography{arXiv130805}
\end{document}